\documentclass[11pt,amssymb,amsfont,a4paper]{article}
\usepackage{enumerate}
\usepackage{amssymb}
\usepackage{latexsym}
\usepackage{amsmath}
\usepackage{hyperref,cite}
\usepackage{amsthm}
\usepackage{empheq}
\usepackage{bm}
\usepackage{authblk} 
\usepackage{comment}
\numberwithin{equation}{section}
\theoremstyle{definition}
\newtheorem{theorem}{Theorem}[section]
\newtheorem{corollary}[theorem]{Corollary}

\newtheorem{definition}[theorem]{Definition}

\newtheorem{claim}[theorem]{Claim}

\newtheorem{remark}[theorem]{Remark}
\newtheorem{lemma}[theorem]{Lemma}

\newcommand{\numberset}{\mathbb}
\newcommand{\N}{\numberset{N}}

\newcommand{\cod}{\mathcal{C}}

\newcommand{\F}{\numberset{F}}

\newcommand{\Fq}{\F_q}
\newcommand{\Fqm}{\F_{q^m}}
\newcommand{\Fqn}{\F_{q^n}}
\newcommand{\Fqmn}{\F_{q^m}^n}

\newcommand{\Fh}{\F_h}

\newcommand{\X}{\boldsymbol{X}}
\newcommand{\Y}{\boldsymbol{Y}}
\newcommand{\balpha}{\boldsymbol{a}}
\newcommand{\bbeta}{\boldsymbol{\beta}}
\newcommand{\Xs}{\boldsymbol{x}}
\newcommand{\Ys}{\boldsymbol{y}}
\newcommand{\wsr}{\mathrm{wt}_{\mathsf{SR}}}
\newcommand{\dsr}{d_{\mathsf{SR}}}
\newcommand{\Bsr}{B_{\mathsf{SR}}}
\newcommand{\Ssr}{S_{\mathsf{SR}}}

\newcommand{\rank}{\text{rank}}
\newcommand{\poly}{\text{poly}}
\newcommand{\spa}{\text{span}}

\newcommand{\os}{1,2,\cdots,}

\newcommand{\eps}{\varepsilon}
\title{Explicit List-Decodable Linearized Reed--Solomon and Folded Linearized Reed--Solomon Subcodes
\thanks{
K. Shang is with School of Computer Science, Shanghai Jiao Tong University. (Email: \href{billy63878@sjtu.edu.cn}{billy63878@sjtu.edu.cn})  C. Yuan is with School of Computer Science, Shanghai Jiao Tong University. (Email: \href{chen_yuan@sjtu.edu.cn}{chen\_yuan@sjtu.edu.cn}) R. Zhu is with School of Computer Science, Shanghai Jiao Tong University. (Email: \href{sjtuzrq7777@sjtu.edu.cn}{sjtuzrq7777@sjtu.edu.cn})
}}
\author{Kuo Shang, Chen Yuan, Ruiqi Zhu}
\date{\today}

\begin{document}

\maketitle

\begin{abstract}
The sum-rank metric is the mixture of the Hamming and rank metrics. The sum-rank metric found its application in network coding, locally repairable codes, space-time coding, and quantum-resistant cryptography. Linearized Reed--Solomon (LRS) codes are the sum-rank analogue of Reed--Solomon codes and strictly generalize both Reed–Solomon and Gabidulin codes.

In this work, we construct an explicit family of $\mathbb{F}_h$-linear sum-rank metric codes over arbitrary fields $\F_h$. Our construction enables efficient list decoding up to a fraction $\rho$ of errors in the sum-rank metric with rate $1-\rho-\varepsilon$, for any desired $\rho \in (0,1)$ and $\varepsilon>0$. Our codes are subcodes of LRS codes, obtained by restricting message polynomials to an $\F_h$-subspace derived from subspace designs, and the decoding list size is bounded by $h^{\mathrm{poly}(1/\eps)}$. 

Beyond the standard LRS setting, we further extend our linear-algebraic decoding framework to folded Linearized Reed--Solomon (FLRS) codes. We show that folded evaluations satisfy appropriate interpolation conditions and that the corresponding solution space forms a low-dimensional, structured affine subspace. This structure enables effective control of the list size and yields the first explicit positive-rate FLRS subcodes that are efficiently list decodable beyond the unique-decoding radius. To the best of our knowledge, this also constitutes the first explicit construction of positive-rate sum-rank metric codes that admit efficient list decoding beyond the unique decoding radius, thereby providing a new general framework for constructing efficiently decodable codes under the sum-rank metric.
\end{abstract}

\section{Introduction}
The sum-rank metric, which unifies both the Hamming and rank metrics, has found widespread application in network coding and related communication systems. Sum-rank metric codes, introduced by N\'obrega and Uch\^oa-Filho~\cite{nobrega2010multishot}, were first used to reduce the size of the network alphabet. A sum-rank metric code consists of $\ell$ matrices, where for each index $i$ ($1 \le i \le \ell$), the matrix has size $m_i\times n_i$ over a finite field $\F_q$. The distance $d_{\mathsf{SR}}$ between two codewords is the sum, over all coordinate matrices, of the ranks of their differences. Sum-rank metric codes can be viewed as a mixture of Hamming-metric and rank-metric codes. In particular, they degenerate to Hamming-metric codes when $n_i = m_i = 1$ for all $i$, and to rank-metric codes when $\ell = 1$. The most extensively studied regime corresponds to the uniform parameter setting $n_1 = n_2 = \cdots = n_\ell = n$ and $m_1 = m_2 = \cdots = m_\ell = m$. In this case, each block can be naturally identified with an element of $\F_{q^m}^n$ (see Section~\ref{sec:pre}). This is the parameter regime on which we focus throughout this work.

In this work, we focus on subcodes of linearized Reed--Solomon (LRS) codes, a class of sum-rank metric codes introduced by Mart\'inez-Pe\~nas~\cite{martinez2018skew}. LRS codes extend classical Reed--Solomon codes to the sum-rank metric through the evaluation of skew polynomials over extension fields, thereby unifying and generalizing the algebraic frameworks underlying both Reed--Solomon and Gabidulin codes. LRS codes have found applications in multi-shot network coding~\cite{martinez2019reliable, nobrega2010multishot}, locally repairable codes~\cite{martinez2019universal}, space-time coding~\cite{lu2005unified}, and post-quantum cryptography~\cite{puchinger2022generic}.

An LRS code is constructed by encoding skew polynomials of degree less than $k$ via their generalized operator evaluations $f(\beta)_a$ (see Definition~\ref{def:generalized_operator_evaluation}), resulting in a sum-rank metric code of rate $R = k/n$ and minimum sum-rank distance $d_{\mathsf{SR}} = n - k + 1$. Since LRS codes attain the Singleton bound under the sum-rank metric, they are referred to as maximum sum-rank distance (MSRD) codes.

Beyond the classical LRS setting, folded versions of Reed--Solomon codes have played a central role in achieving list-decoding capacity under the Hamming metric~\cite{parvaresh2005correcting, guruswami2008explicit, ashvinkumar2025algorithmic, chen2025explicit}. The folding operation groups several evaluations together to create stronger interpolation constraints, significantly improving list-decoding performance. Motivated by these developments, folded linearized Reed--Solomon (FLRS) codes were introduced as the natural sum-rank metric analogs of folded Reed–-Solomon codes~\cite{hormann2021efficient, hormann2024interpolation}. 
FLRS codes inherit the skew-polynomial evaluation structure of LRS codes while benefiting from the algebraic gains of folding, making them promising candidates for list decoding beyond the unique-decoding radius under the sum-rank metric.

However, while folded Reed--Solomon codes under the Hamming metric and folded Gabidulin codes under the rank metric have been extensively studied~\cite{mahdavifar2012list}, the list decoding problem for FLRS codes in the general sum-rank metric remains largely unexplored. Existing works on FLRS codes primarily focus on unique decoding or restricted structural variants, and no explicit constructions are currently known that admit efficient list decoding beyond the unique-decoding radius.

In parallel, list decoding in the sum-rank metric is far less understood than in the Hamming or rank metrics. Although significant progress has been made on list decoding for rank-metric codes~\cite{trombetti2020list, xing2017new, guo2024random, wachter2013bounds, ding2014list}, very little is known about explicit constructions that can be efficiently list-decoded beyond half the minimum distance in the sum-rank metric. This gap motivates the study of explicit LRS and FLRS constructions that admit efficient list decoding approaching the list-decoding capacity.

     \subsection{Related Works}
     Building upon the progress on list decoding in the rank-metric, recent research has increasingly focused on the sum-rank metric, which generalizes both rank and Hamming metrics and plays a central role in modern coding applications. 
     Significant efforts in recent years has explored the fundamental coding-theoretical properties of sum-rank metric codes~\cite{byrne2021fundamental,ott2021bounds,martinez2019theory,camps2022optimal,ott2022covering}. 
     
     There are fruitful results on the constructions of LRS codes and efficient decoding algorithms for them~\cite{boucher2020algorithm,bartz2021fast,hormann2021efficient,hormann2022error}. In~\cite{bartz2021fast}, the authors employ skew polynomials to accelerate the decoding algorithm for linearized Reed--Solomon codes in the sum-rank metric. 
     Recently, ~\cite{liu2025list} established a list-decoding capacity theorem for sum-rank metric codes. In particular, a random general sum-rank code of rate below capacity is $(\rho,O(\frac{1}{\eps}))$-list-decodable with high probability, where $\rho\in (0,1)$ and $\eps>0$. 
     
     In contrast, the list-decodability of explicit sum-rank code families has not been thoroughly understood. Bounds on the list sizes of LRS codes were investigated in~\cite{puchinger2021bounds} where it was also shown that certain families of LRS codes cannot be list-decoded beyond the unique decoding radius. Chen~\cite{chen2021list} derived upper bounds on the achievable rate of list-decodable sum-rank metric codes in the uniform setting $m_i = m$ and $\eta_i = \eta$ for all $i = 1, \ldots, \ell$. Several list-decoding algorithms have been proposed for interleaved linearized Reed--Solomon codes~\cite{bartz2021decoding,hormann2024syndrome}, folded linearized Reed--Solomon codes~\cite{hormann2024interpolation} and lifted interleaved linearized Reed--Solomon~\cite{bartz2024fast}. However, for constant-rate codes, the output list size of these algorithms typically grows exponentially with the code length~$n$.

     Beyond these developments, techniques from list decoding in the Hamming metric have also influenced progress in the rank and sum-rank metrics. A notable example is is the work of Guruswami and Xing~\cite{guruswami2013list}, which presents an explicit list-decodable Reed--Solomon subcode based on the \textit{subspace design}. This approach subsequently adapted to the rank-metric setting to obtain efficiently list-decodable Gabidulin subcodes~\cite{guruswami2016explicit}.

    These results highlight both the recent progress and remaining challenges in list decoding of LRS codes under the sum-rank metric, thereby motivating our work on explicit sum-rank metric code constructions that admit efficient list decoding beyond the unique-decoding radius.
    
\subsection{Overview of Main Results}
     In this work, we give the first explicit construction of $\F_h$-linear subcodes of LRS and FLRS codes that are efficiently list-decodable beyond the unique-decoding radius under the sum-rank metric, with provably bounded output list size.
\begin{theorem}
    For every $\eps\in(0,1)$ and integer $s>0$, there exists an explicit $\Fh$-linear subcode of the linearized Reed--Solomon code $\mathrm{LRS}[\balpha,\bbeta;\boldsymbol{n},k]\subseteq\F_{{h}^{t}}^{n}$ with rate at least $(1-2\eps)k/n$, where the evaluation points span $\F_{h^n}$ and the messages are taken from $\F_{h^t}[x;\sigma]$ with $\sigma(x)=x^h$.
    
    Moreover, the code is list-decodable in polynomial time from up to $\frac{s}{s+1}(n-k)$ sum-rank errors. The output list is contained in an $\Fh$-subspace of dimension at most $O(s^2/\eps^2)$, and hence has size at most $h^{O(s^2/\eps^2)}$.
\end{theorem}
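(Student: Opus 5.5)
We follow the Guruswami--Xing template for Reed--Solomon and Gabidulin subcodes, adapted to skew polynomials and generalized operator evaluation. The argument has three stages.

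\textbf{Stage 1: linear-algebraic interpolation.} Given a received word $\boldsymbol{y}=(y^{(1)},\dots,y^{(\ell)})$ with blocks in $\F_{h^t}^{n_i}$, we look for a nonzero interpolating skew polynomial of the form
\[
Q(x,Y_1,\dots,Y_s)=A_0(x)+A_1(x)Y_1+\cdots+A_s(x)Y_s .
\]
Here $A_0$ has degree less than $D+k-1$, each $A_j$ with $j\ge 1$ has degree less than $D$, and we choose $D=\lceil (n-k+1)/(s+1)\rceil$ roughly. We impose the $n$ linear conditions that $Q$ vanishes at the evaluation data. In the $i$-th block these are expressed through the generalized operator evaluation $(\cdot)_{a_i}$ at the points $\beta_{i,j}$, with $Y_r$ replaced by $\sigma^{r-1}$-shifts of the received symbols, i.e.\ $y_{i,j}^{h^{r-1}}$ suitably twisted.

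Counting unknowns against the $n$ constraints shows that a nonzero $Q$ exists. Suppose the message $f$ satisfies $\dsr(\mathrm{ev}(f),\boldsymbol{y})\le\frac{s}{s+1}(n-k)$. Then the skew polynomial
\[
P(x)=A_0+A_1 f+A_2\,\sigma(f)+\cdots+A_s\,\sigma^{s-1}(f)
\]
has degree less than $D+k-1$. By $\F_h$-linearity of operator evaluation, within each block its evaluations vanish on a subspace of dimension at least $n_i-\mathrm{rk}(\text{error}_i)$. Summing over blocks gives at least $n-\dsr$ roots in the sum-rank sense. Because the $\beta$'s span $\F_{h^n}$ and the conjugacy classes $a_i$ are distinct, the sum-rank root bound for skew polynomials (the MSRD property of LRS codes stated earlier) forces $P=0$ once $n-\dsr\ge D+k-1$, and this holds by the choice of $D$.

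\textbf{Stage 2: structure of the solution set.} The identity
\[
A_0+\sum_j A_j\,\sigma^{j-1}(f)=0
\]
is $\F_h$-linear in the coefficient vector $(f_0,\dots,f_{k-1})\in\F_{h^t}^k$, because $\sigma(x)=x^h$ fixes $\F_h$. Comparing coefficients, as in~\cite{guruswami2016explicit}, gives a triangular system:
\[
\sum_j \sigma^{?}(a_{j})\,\sigma^{j-1}(f_i)=\text{(terms in }f_0,\dots,f_{i-1}).
\]
Normalizing so that some $A_j$ has nonzero constant term, each $f_i$ lies in an affine translate of the kernel of a nonzero $\F_h$-linear map
\[
z\mapsto \sum_j c_j z^{h^{j-1}}
\]
on $\F_{h^t}$. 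Such a kernel has $\F_h$-dimension at most $s-1$. Hence the solutions form an affine $\F_h$-subspace of $\F_{h^t}^k$ that is \emph{$(s-1,k)$-periodic}: projecting onto any prefix, each fresh coordinate contributes a subspace of dimension at most $s-1$ of $\F_{h^t}$ over $\F_h$.

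\textbf{Stage 3: pruning with subspace designs.} We now invoke an explicit $(s-1,L)$-subspace design $H_1,\dots,H_k\subseteq\F_h^t$ of codimension $\eps t$ each. This is the Guruswami--Kopparty / Guruswami--Xing construction, valid over arbitrary $\F_h$ when $t=O(s/\eps)$, with
\[
\sum_i\dim(W\cap H_i)\le L=O(s^2/\eps^2)\quad\text{for every } W \text{ with } \dim W=s-1 .
\]
We restrict messages to $f_i\in H_i$ under an $\F_h$-identification $\F_{h^t}\cong\F_h^t$. The resulting code is $\F_h$-linear with rate $(1-\eps)k/n\ge(1-2\eps)k/n$. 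Intersecting the periodic affine space from Stage 2 with $H_1\times\cdots\times H_k$, the standard argument of~\cite{guruswami2013list} bounds the dimension of the resulting list space by $\sum_i\dim(W_i\cap H_i)\le L$. It also computes a basis in polynomial time, giving list size $h^{O(s^2/\eps^2)}$.

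\textbf{Main obstacle.} We expect the main difficulty to lie in Stage 1–2. First, the interpolation conditions for generalized operator evaluation must be set up so that $P=0$ genuinely follows from the sum-rank root-counting bound. Second, the $\sigma$-twisting across the different conjugacy classes $a_i$ must still produce a coefficient recursion whose leading map is a linearized polynomial of $h$-degree less than $s$. If the recursion is messy, the fix is to choose $a_i$ in $\F_h$-fixed positions, or to exploit that $\sigma$ commutes with the Frobenius, so that the triangular structure survives.
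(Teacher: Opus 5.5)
There is a genuine gap in Stage~1, caused by twisting with the wrong Frobenius. You take $Y_r$ to be $y_{i,j}^{h^{r-1}}$, and you need this to equal $\sigma^{r-1}(f)(\beta_{i,j})_{a_i}$ at error-free positions. For $\sigma(x)=x^h$, however,
\[
\bigl(f(\beta)_a\bigr)^h=\sum_i f_i^h\,\sigma^{i+1}(\beta)\,\sigma\bigl(N_i(a)\bigr)=\sigma(f)\bigl(\sigma(\beta)\bigr)_{\sigma(a)},
\]
and this coincides with $\sigma(f)(\beta)_a$ only when $\beta$ and $a$ lie in the fixed field $\F_h$. Since the evaluation points span $\F_{h^n}$, they do not all lie in $\F_h$ (for $n\ge 2$). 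Putting $a_i\in\F_h$, as you suggest, does not help, because $\beta$ still moves. So the interpolation constraints do not give $P(\beta_{i,j})_{a_i}=0$ at correct positions, and the root count collapses. A ``suitable twist'' cannot repair this. Rewriting at the representative $a$ gives $a^{-1}\sigma(f)(\mathcal{D}_a(\beta))_a$, so $Y_r$ becomes an evaluation at the shifted point $\mathcal{D}_a^{r-1}(\beta)$. Agreement is then needed on an intersection of $s$ shifted copies of the agreement space, whose codimension can be $s e_i$, and this destroys the $\frac{s}{s+1}(n-k)$ radius.

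The missing idea is the actual role of the hypothesis that $(\balpha,\bbeta)$ lies in the subfield $\F_{h^n}$ with $n\mid t$. One twists by the $q$-Frobenius with $q=h^n$ (your remark that $\sigma$ commutes with ``the Frobenius'' only points the right way for this choice). This map fixes every $\mathcal{D}_a^i(\beta)\in\F_{h^n}$, so $f^{(q)}(\beta)_a=(f(\beta)_a)^q$, and Stage~1 works with $Y_r=y_{i,j}^{q^{r-1}}$.

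This fix changes Stages~2 and~3. The leading map in the coefficient recursion becomes $z\mapsto\sum_j c_j z^{q^{j-1}}$. Its kernel is an $\F_{h^n}$-subspace of $\F_{h^t}$ of $\F_{h^n}$-dimension at most $s-1$, hence of $\F_h$-dimension up to $n(s-1)$, not $s-1$. The solution set is therefore $(s-1,m,k)$-periodic only with respect to $\F_{h^n}$-linear $W\subseteq\F_{h^n}^m$, where $m=t/n$. An ordinary $\F_h$-subspace design would have to handle test spaces of $\F_h$-dimension $n(s-1)$, giving a bound that grows with $n$. Moreover, once $n\mid t$, the choice $t=O(s/\eps)$ is impossible for large $n$. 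The paper instead uses the Guruswami--Wang--Xing $(s,A,n)$-subspace designs, whose test spaces are exactly these $\F_{h^n}$-linear $W$. They bound the list's $\F_h$-dimension by $A=2(m-1)s/\eps$, and choosing $m=O(s/\eps)$ gives $O(s^2/\eps^2)$.
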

This proof is inspired by the argument of Guruswami, Wang, and Xing in~\cite{guruswami2016explicit}. Broadly, we construct a family of LRS subspace codes in which each message vector $(f_0,\ldots,f_{k-1})$ is chosen from a product of subspaces $H_1 \times \cdots \times H_k$ forming a subspace design. Unlike the Hamming-metric setting where one can often take the alphabet size comparable to the block length, in the rank/sum-rank setting the relevant ambient field is typically an extension field whose size grows exponentially with the block length. To address this issue, we employ a suitable generalization of subspace designs, as suggested in~\cite{guruswami2016explicit} for the case of Gabidulin codes. Moreover, that work provides an explicit construction of such generalized subspace designs, which ensures that our resulting LRS subcode construction remains explicit.

We develop a linear-algebraic list-decoding algorithm inspired by the linear-algebraic frameworks of~\cite{guruswami2013list} for folded Reed--Solomon and derivative codes, and of~\cite{guruswami2016explicit} for Gabidulin codes. The algorithm proceeds in two main steps.

First, we perform an interpolation step to construct a skew polynomial $Q$ satisfying
$$Q\left(\beta_{i,j},y_{i,j},y_{i,j}^q,y_{i,j}^{q^2},\cdots,y_{i,j}^{q^{s-1}}\right)_{a_i}=0$$
for all positions corresponding to the received word $\boldsymbol{y}=(\boldsymbol{y}_1\mid\boldsymbol{y}_2\mid\cdots\mid\boldsymbol{y}_{\ell})\in\Fqmn$, where $\boldsymbol{y}_i=(y_{i,1},y_{i,2},\cdots,y_{i,n_{i}})\in\Fqm^{n_i}$. 

Then, we apply a linear-algebraic algorithm to compute an affine subspace that contains all message vectors satisfying the resulting algebraic decoding condition. We show that this solution space has a periodic subspace structure and can be computed in polynomial time. The final list of candidate messages is obtained by intersecting this periodic subspace with the message space of our code. Since the message space is chosen from a subspace design and periodic subspaces have limited intersection with such designs, the resulting list size is provably small. Since both the periodic subspace and the code are $\F_h$-linear, their intersection can be computed in time polynomial in $t$ and $\log h$. Moreover, the intersection is an $\F_h$-subspace of dimension at most $O(ms/\varepsilon)$, and hence the resulting list size is at most $h^{O(ms/\varepsilon)}$.

The following is our construction of construction of a subcode of the folded linearized Reed--Solomon code.

\begin{theorem}
For any $\eps\in(0,1)$ and integer $1\le s\le \lambda$, there exists an explicit subcode of the $\lambda$-folded linearized Reed--Solomon code $\mathrm{FLRS}[\boldsymbol{a},\gamma;\lambda,n,k]\subseteq\F_{q^m}^{\lambda\times n}$ of block length $n=N/\lambda$ and rate at least $(1-\varepsilon)k/N$
such that the code is list-decodable in polynomial time from up to
  \[
    e\ \le\ \frac{s}{s+1}\left(\frac{n(\lambda-s+1)-k+1}{\lambda-s+1}\right)
  \]
  of sum-rank errors, with output list size at most $(d/\varepsilon)^d$, where $d=\lceil k/m\rceil(s-1)=O(s-1)$.
\end{theorem}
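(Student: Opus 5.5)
The plan follows the Guruswami--Wang--Xing linear-algebraic template, adapted to folded skew evaluations. The FLRS code encodes $f\in\Fqm[x;\sigma]$ with $\deg f<k$. Block $i$ of the codeword is the $\lambda\times n$ array (or its $i$-th column) of generalized operator evaluations $f(\gamma^{(j)})_{a_i}$, $j=0,\dots,\lambda-1$, i.e.\ the evaluation at $\gamma$ and its $\sigma$-shifts. Given a received word $Y$, we slide a window of length $s$ inside each folded block. This produces $\lambda-s+1$ interpolation points per block, each of the form
\[
\bigl(\sigma^{j}(\gamma)\,,\,y_{i,j},\dots,y_{i,j+s-1}\bigr),
\]
which gives $N'=n(\lambda-s+1)$ points in total.

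\textbf{Interpolation.} We seek a nonzero skew polynomial
\[
Q(x,Y_1,\dots,Y_s)=A_0(x)+A_1(x)Y_1+\dots+A_s(x)Y_s
\]
with $\deg A_0<D+k-1$ and $\deg A_j<D$. It must vanish at all $N'$ points under generalized operator evaluation. Counting unknowns ($s D+D+k-1$) against constraints ($N'$) shows that taking $D=\lceil (N'-k+1)/(s+1)\rceil$ suffices. The system is $\Fqm$-linear, so it is solvable in polynomial time.

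\textbf{Agreement.} Next we show that for every codeword within sum-rank distance $e$ of $Y$, the skew polynomial
\[
P=A_0+A_1f+A_2\,\sigma(f)+\dots+A_s\,\sigma^{s-1}(f),
\]
with shifts understood appropriately in $\Fqm[x;\sigma]$, vanishes on a set of generalized evaluation points of sum-rank ``size'' at least $N'-e(\lambda-s+1)$. One error of rank $r$ in a block corrupts at most $r(\lambda-s+1)$ windowed constraints in the rank sense, via the kernel-dimension argument used for LRS codes. Since $\deg P<D+k-1$, the sum-rank root bound for skew polynomials (the MSRD property) forces $P=0$ once $(\lambda-s+1)(n-e)\ge D+k-1$. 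This is exactly the stated bound on $e$.

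\textbf{Solution space and list size.} The identity $P=0$ is $\Fq$-linear (indeed $\Fqm$-semilinear) in the coefficients $f_0,\dots,f_{k-1}$. Comparing coefficients degree by degree, one expresses each $f_i$ affinely in terms of earlier ones. The equation at degree $i$ has a leading operator $\sum_j a_{j}\sigma^{j}(\cdot)$, which is a $q$-linearized polynomial of $q$-degree at most $s-1$, so its kernel has $\Fq$-dimension at most $s-1$. We then group the $k$ coefficients into $\lceil k/m\rceil$ chunks, as in the periodic-subspace argument of Guruswami--Xing, and show that the solution set is contained in an affine space that is $(s-1,\lceil k/m\rceil)$-periodic. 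The subcode restricts messages to a subspace-evasive or subspace-design set. Following Guruswami--Xing/Dvir--Lovett, we take an explicit $(d,(d/\eps)^d)$-evasive subset of rate $1-\eps$ inside the message space, with $d=\lceil k/m\rceil(s-1)$. Its intersection with the affine solution space then has size at most $(d/\eps)^d$, and the rate is at least $(1-\eps)k/N$. The intersection can be enumerated in polynomial time by the standard evasive-set algorithm.

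\textbf{Main obstacle.} The delicate step is the agreement bound. We must show precisely that, under the sum-rank metric, the sliding-window points inherit a rank-deficiency of at most $(\lambda-s+1)\,\mathrm{rk}(E_i)$ per block. This means the kernel of $P$ restricted to block $i$ has large $\Fq$-dimension, which requires $\gamma,\sigma(\gamma),\dots$ to be a normal-type basis so that the folded evaluations are $\Fq$-independent. I would prove it by writing the error in block $i$ as $E_iB_i$ with $\mathrm{rk}$ equal to the rank, and exhibiting a subspace of evaluation points of codimension at most that amount on which $Y$ agrees with the codeword.
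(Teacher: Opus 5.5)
There is a genuine gap: your overall template (interpolate, derive a key equation, bound the solution space, intersect with a Dvir--Lovett evasive set) matches the paper's, but the two middle steps use the wrong algebra for this code. The folds of $\mathrm{FLRS}$ are evaluations at consecutive \emph{powers} $\gamma^{i\lambda+j}$, not at $\sigma$-shifts $\sigma^j(\gamma)$. Neighbouring entries of a window are therefore related by a multiplicative shift: by the product rule, $f(\gamma b)_a=(f\cdot\gamma)(b)_a$. The relation satisfied by every close codeword is
\[
Q_0+Q_1\cdot f+Q_2\cdot f\cdot\gamma+\cdots+Q_s\cdot f\cdot\gamma^{s-1}=0,
\]
which the paper imports from Theorem~3 of H\"ormann et al. Frobenius twists $f^{(q^{j})}$ are the LRS device of Section~3. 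They relate a symbol to its own $q^j$-th power, and only when the evaluation points lie in the subfield $\Fq$; they do not relate $y_{i,j}$ to $y_{i,j+1}$. So your $P=A_0+A_1f+A_2\sigma(f)+\cdots$ need not vanish at the windowed points, and the agreement step fails as stated.

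This also removes the source of $d=\lceil k/m\rceil(s-1)$. With the correct equation, the coefficient of $x^r$ is $R_0(\gamma^{q^r})f_r+(\text{terms in }f_0,\dots,f_{r-1})+q_{0,r}$, where $R_0(z)=\sum_{l=0}^{s-1}q_{l+1,0}z^l$ is an \emph{ordinary} polynomial of degree at most $s-1$. The leading coefficient is a scalar, so $f_r$ is either determined by earlier coefficients or completely free. It is free only when $\gamma^{q^r}$ is a root of $R_0$. Since $r\mapsto\gamma^{q^r}$ is $m$-periodic with $m$ distinct values, this happens for at most $\lceil k/m\rceil(s-1)$ indices.

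In your version the leading operator is a $q$-linearized polynomial, which yields only an $(s-1,m,k)$-periodic $\Fq$-subspace. Each $f_r$ ranges over a coset of an $\Fq$-space of dimension up to $s-1$, so the $\Fq$-dimension can be $k(s-1)$, and the $\Fqm$-affine hull is all of $\Fqm^k$ as soon as that space is nonzero. Nothing in ``grouping into $\lceil k/m\rceil$ chunks'' turns this into a $d$-dimensional $\Fqm$-affine subspace. A $(d,(d/\eps)^d)$-evasive set in $\Fqm^k$ gives no bound for such a set; periodic $\Fq$-subspaces are what the subspace designs of Section~3 control, and they give bounds of a different form.

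One caveat, since you rightly observed that the equation is only $\Fqm$-semilinear. Even in the paper's derivation the earlier coefficients enter through $\sigma^{j}(f_{r-j})$. So the counting really shows that the candidate set is parametrized by at most $d$ free $\Fqm$-coordinates, with the remaining coordinates $\Fq$-affine (not $\Fqm$-affine) in the earlier ones. The paper's assertion that this set is an affine $\Fqm$-subspace of dimension $d$, which is exactly what the Dvir--Lovett bound requires, passes over the same semilinearity and would need further justification.
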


As in the LRS setting, the central idea is to restrict the message space so that its intersection with the algebraically characterized solution space produced by the decoder is provably small. Given a received word, we first perform an interpolation step that produces algebraic constraints satisfied by all valid FLRS message polynomials. These constraints characterize an affine solution space over $\Fqm$ of dimension at most $O(s-1)$, which can be computed in polynomial time. To control the list size, we restrict the message space to an explicit subspace-evasive set constructed in~\cite{guruswami2013linear}. Finally, by intersecting the low-dimensional affine solution space with the message space, we obtain the output list size is at most $(d/\varepsilon)^d$, where $d=\lceil k/m\rceil(s-1)=O(s-1)$.

Furthermore, one may instead employ the Monte Carlo construction of subspace-evasive sets due to Guruswami~\cite{guruswami2011linear}, which yields a substantially smaller list size.

\begin{theorem}
    For any $\eps > 0$, there exists a Monte Carlo construction of a subcode
    of the $\lambda$-folded linearized Reed--Solomon code
    $\mathcal{C} = \mathrm{FLRS}[\balpha,\gamma;\lambda,n,k]$ of block length $n = N/\lambda$ such
    that, with high probability, $\mathcal{C}$ has rate at least
    $(1-\eps)k/N$ and is list-decodable in polynomial time from a fraction
    \[
    \frac{s}{s+1}\left(\frac{n(\lambda-s+1)-k+1}{\lambda-s+1}\right)
    \]
    of sum-rank errors, for all integers $1 \le s \le m$, with output list size at most
    $O(s/\eps)$.
\end{theorem}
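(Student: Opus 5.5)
The plan is to reuse the FLRS decoding framework behind the previous theorem (the explicit subspace-evasive version) and change only the set that messages are restricted to. The explicit evasive set is replaced by Guruswami's Monte Carlo subspace-evasive set~\cite{guruswami2011linear}. The argument has three steps, carried out in order.

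\textbf{Step 1 (interpolation).} Given a received word $\boldsymbol{y}\in\F_{q^m}^{\lambda\times n}$, form the $(s+1)$-variate skew polynomial
\[
Q(x,y_1,\dots,y_s)=Q_0(x)+Q_1(x)y_1+\cdots+Q_s(x)y_s .
\]
Impose one vanishing condition for each of the $n(\lambda-s+1)$ windows of $s$ consecutive folded entries. Choose the degree bounds so that the number of unknowns exceeds the number of constraints; this guarantees a nonzero solution, found by solving a linear system.

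\textbf{Step 2 (affine solution space).} Fix a message $f$ whose codeword lies within sum-rank distance $e$ of $\boldsymbol{y}$. Under the stated bound on $e$, the polynomial $Q(x,f,f^{[1]},\dots,f^{[s-1]})$ has more roots than its degree, so it vanishes identically. I take this from the preceding FLRS analysis.

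The identity $Q_0+Q_1f+\cdots+Q_sf^{[s-1]}=0$ is a linear system in the coefficients of $f$ over $\F_{q^m}$. Its solution set is an affine subspace of dimension at most $s-1$ over $\F_{q^m}$. To see this, view it coefficient-by-coefficient: it is a linear recurrence whose leading term is invertible except on a small set. This set is computable in polynomial time.

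\textbf{Step 3 (restricting messages).} Regard messages as vectors in $\F_q^{km}$. Over $\F_q$, the affine subspace from Step 2 has dimension at most $d=(s-1)m$, or $\lceil k/m\rceil(s-1)$ after the appropriate blocking, as in the previous theorem. Invoke the Monte Carlo construction: a random $\F_q$-subspace is chosen so that, with high probability, it is $(d,O(d/\eps))$-subspace evasive, meaning every affine subspace of dimension $d$ meets it in at most $O(d/\eps)$ points. Equivalently, one can take a suitably pseudorandom family that keeps a $(1-\eps)$ fraction of the dimension.

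The code is then $\{\mathrm{FLRS}(f): f\in\mathcal{S}\}$, where $\mathcal{S}$ is the evasive set. Its rate is at least $(1-\eps)k/N$. The list is the intersection of the affine space with $\mathcal{S}$, so its size is $O(s/\eps)$ once the evasiveness parameter is tuned to $s$. Because $\mathcal{S}$ is a subspace in Guruswami's construction, this intersection is computed by linear algebra in polynomial time. Finally, take a union bound over the allowed values $1\le s\le m$. Since $\mathcal{S}$ is a subspace, a single choice works for all $s$ provided it is evasive for the largest relevant dimension with the correct scaling.

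\textbf{Main obstacle.} The hard part is matching parameters. Guruswami's Monte Carlo evasive subspaces give list size $O(d/\eps)$ in terms of the $\F_q$-dimension $d$ of the affine space. To obtain $O(s/\eps)$, I must ensure the solution space has small dimension over the base field used by the evasive construction. My first attempt is to apply the evasive construction over $\F_{q^m}$ itself, viewing messages in $\F_{q^m}^k$. Then the affine space has dimension at most $s-1$ over $\F_{q^m}$, and the Monte Carlo bound gives $O(s/\eps)$ directly. If the field-size requirement of that construction fails, I would fall back to the blocked dimension $\lceil k/m\rceil(s-1)$, treating $\lceil k/m\rceil$ as constant.
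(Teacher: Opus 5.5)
Your pipeline is the one the paper intends for this theorem: interpolation over the $n(\lambda-s+1)$ windows, the key equation, the affine $\F_{q^m}$-solution space, and then restricting messages to Guruswami's Monte Carlo evasive set applied over $\F_{q^m}$ (your ``first attempt''). The list-size bound $|\mathcal{P}(\boldsymbol{y})\cap\mathcal{S}|=O(d/\eps)$ follows from evasiveness alone, so that part is fine.

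The genuine error is in Step~3. You assert that Guruswami's $\mathcal{S}$ is a (random) subspace, and then use linearity twice: to compute $\mathcal{P}(\boldsymbol{y})\cap\mathcal{S}$ by linear algebra, and to argue that one $\mathcal{S}$ serves every $s$. No subspace can be subspace-evasive. A subspace of dimension at least $d$ contains a $d$-dimensional subspace $W$ with $|W\cap\mathcal{S}|=|W|$, which is $q^{md}$ over $\F_{q^m}$. If you meant an $\F_q$-linear $\mathcal{S}\subseteq\F_{q^m}^k$, the situation is no better: every intersection with an $\F_{q^m}$-affine space is $\F_q$-affine, and the $\F_{q^m}$-span of $d$ $\F_{q^m}$-independent elements of $\mathcal{S}$ meets $\mathcal{S}$ in at least $q^d$ points. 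That puts you back in the $q^{O(\cdot)}$ subspace-design regime of Section~\ref{sec:construction}, not an $O(s/\eps)$ bound.

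Guruswami's set is a pseudorandom nonlinear subset, so linear-algebra pruning is unavailable. Brute-force membership testing over $\mathcal{P}(\boldsymbol{y})$ costs $q^{md}$ operations, which is exponential when $m=\Theta(n)$. In the folded Reed--Solomon setting this enumeration is affordable only because the alphabet size is polynomial in $n$. The paper states this theorem without a separate proof; it relies on the same ``the intersection can be computed efficiently'' claim as in Theorem~\ref{thm:flrs-subcode-explicit}. A real justification of the polynomial-time clause has to use the specific description of $\mathcal{S}$, not linearity.

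Likewise, making one sample work for all $s$ should come from a union bound over $s$ of the failure events, each of probability $(q^m)^{-\Omega(k)}$. Evasiveness at the largest dimension alone only yields $O(d_{\max}/\eps)$ for every $s$.

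There are also two smaller corrections.
\begin{itemize}
\item In Step~2, the $\F_{q^m}$-dimension bound is $\lceil k/m\rceil(s-1)$, not $s-1$. The pivot $R_0(\gamma^{q^r})$ of the triangular system is periodic in $r$ with period $m$, so each of the at most $s-1$ roots of $R_0$ among the conjugates of $\gamma$ can recur $\lceil k/m\rceil$ times. The $O(s/\eps)$ bound therefore relies on $\lceil k/m\rceil=O(1)$, which the paper takes from the regime $m=\Theta(n)$.
\item Theorem~\ref{thm:pseudorandom_evasive_set} holds over any prime power, so there is no field-size obstruction to using it over $\F_{q^m}$. The conditions you actually need to check are $d\le\eps k/2$ and, for the interpolation windows to exist, $s\le\lambda$.
\end{itemize}
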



\subsection{Organization of the Paper}	
In Section~\ref{sec:pre}, we present the preliminaries needed for this paper, including some definitions and known results about sum-rank metric codes, the construction method of Linearized Reed--Solomon codes, as well as the definitions of subspace designs and periodic subspaces. In Section~\ref{sec:list_decoding}, we describe our linear-algebraic list-decoder and analyze its output in order to show how to use subspace designs to construct list-decodable sum-rank metric codes. The construction of list-decodable subspace codes appears in Section~\ref{sec:construction}. Section~\ref{sec:FLRS} extends our list-decoding framework beyond standard LRS 
codes to \emph{folded linearized Reed--Solomon codes}. We establish the corresponding interpolation conditions, and analyze the structure of the solution space. We then demonstrate that the same subspace-design-based argument yields explicit FLRS subcodes that are efficiently list-decodable beyond the unique decoding radius.

\section{Preliminaries}\label{sec:pre}
\subsection{Sum-Rank Metric Codes}
Sum-rank metric measures \textit{tuples of matrices} $\F_q^{m_1\times {n}_1}\times\cdots\times\F_q^{m_{\ell}\times {n}_{\ell}}$, where $m_1,\cdots,m_{\ell}$, ${n}_1,\cdots,n_{\ell}$ are positive integers satisfying ${n}_i\leq m_i$ for $1\leq i\leq\ell$, and $q$ is a prime power. Let $\X=(X_1,\ldots,X_{\ell}),\Y=(Y_1,\ldots,Y_{\ell})\in\F_q^{m_1\times {n}_1}\times\cdots\times\F_q^{m_{\ell}\times {n}_{\ell}}$, their \emph{weight} and \emph{distance} are defined as
$$
\mathrm{wt}_{\mathsf{SR}}(X)\coloneqq\sum_{i=1}^\ell \rank(X_i),\quad d_{\mathsf{SR}}(X,Y)\coloneqq\sum_{i=1}^\ell \rank(X_i-Y_i).
$$
We focus on the uniform-parameter setting in which $m_i=m$ 
for all $1\leq i\leq\ell$ and let $n=\sum_{i=1}^{\ell}n_i$. Throughout this paper, fix an $\F_q$-basis of $\F_{q^m}$, which induces an $\F_q$-linear isomorphism $\F_{q^m}^{n_i} \cong \F_q^{m \times n_i}$. Under this identification, $\F_q^{m \times n_1} \times \cdots \times \F_q^{m \times n_\ell}$ can be viewed as $\F_{q^m}^{n}$. Therefore, the weight and distance defined above induce a metric on $\Fqmn$, called the \emph{sum-rank metric}. When $\ell=n$, it reduces to the Hamming metric, and when $\ell=1$, to the rank metric. 

For a sum-rank metric code $\cod\subseteq\Fqmn$, its \emph{rate} and \emph{minimum sum-rank distance}(hereafter simply referred to as the distance) are defined as 
\[R\coloneqq\frac{\log_q|\cod|}{n},\quad\dsr(\cod):=\min\limits_{\Xs\neq\Ys}\{\dsr(\Xs,\Ys):\Xs,\Ys\in\cod\}.\]

To formalize the notion of list-decodability for sum-rank metric codes, we first define sum-rank metric spheres and balls. For a center $\Xs\in\Fqmn$ and a radius $0\leq r\leq n$, the sum-rank sphere and ball are defined as
\[\Ssr(\Xs,r)=\{\Ys\in\Fqmn\mid\dsr(\Ys,\Xs)= r\},\]\[\Bsr(\Xs,r)=\{\Ys\in\Fqmn\mid\dsr(\Ys,\Xs)\leq r\}.\]
Using these definitions, we can now define list-decodable sum-rank metric codes.
\begin{definition}[List-decodable sum-rank metric code]
	 For $\rho\in(0,1)$ and an integer $L\geq1$, a sum-rank metric code $\cod\subseteq\Fqmn$ is said to be $(\rho,L)$-list-decodable if for any $\Xs\in\Fqmn$, 
	$$
	|\Bsr(\Xs,\rho n)\cap\cod|\leq L.
	$$
\end{definition}
\subsection{Skew polynomials}
Analogous to Reed--Solomon codes, linearized Reed--Solomon (LRS) codes are defined via the evaluation of skew polynomials. In contrast to the conventional polynomial ring, the ring of skew polynomials is defined with respect to an automorphism $\sigma:\Fqm\rightarrow\Fqm$. When considering the field extension $\Fqm/\Fq$, we take $\sigma$ to be the Frobenius automorphism $\sigma(x)=x^q$.

The skew polynomial ring $\Fqm[x;\sigma]$ consists of polynomials $f=\sum_{i}f_ix^i$ where addition is defined as
\[f+g=\sum_{i}(f_i+g_i)x^i,\]
and multiplication is defined by
\[f\cdot g=\sum_{i}\sum_{j}f_i\sigma^i(g_j)x^{i+j}.\]

Multiplication in $\Fqm[x;\sigma]$ is non-commutative whenever $\sigma$ is not an identity transformation. The degree of a skew polynomial $f=\sum_{i}f_ix^i$ is defined as
\[\deg f\coloneqq
\begin{cases}
    \max\{i:f_i\neq0\}, & \text{if }f\neq0,\\
    -\infty, & \text{otherwise.}
\end{cases}\]
For an integer $k$, we denote by $\Fqm[x;\sigma]_{<k}$ the set of skew polynomials of degree less than $k$. 

For any $a,b\in\Fqm$, define the operator
\[\mathcal{D}_a(b)\coloneqq\sigma(b)a.\]
For an integer $i\geq0$, define recursively
\[\mathcal{D}_{a}^{i+1}(b)\coloneqq\mathcal{D}_a\left(\mathcal{D}_a^i(b)\right)=\sigma^{i+1}(b)N_{i+1}(a)\]
where $\mathcal{D}^0_a(b) = b$ and $$N_{i}(a):=\prod_{j=0}^{i-1}\sigma^j(\alpha)$$is called the \textit{generalized power function}~\cite{lam1988vandermonde}. 

For an integer $i<0$, define
\[\mathcal{D}_a^{-i}(b):=\sigma^{-i}(b)\left(\sigma^{-i}\left(N_{i}(a)\right)\right)^{-1}.\]
Observe that for any integers $i, j$, the operators satisfy
\[\mathcal{D}_a^j\left(\mathcal{D}_a^i(b)\right)=\mathcal{D}_a^{i+j}(b).\]
Linearized Reed--Solomon codes are evaluation codes of skew polynomials, where the evaluation map is given by the \emph{generalized operator evaluation}.
\begin{definition}[Definition~20, \cite{martinez2018skew}]\label{def:generalized_operator_evaluation}
    The \textit{generalized operator evaluation} of a skew polynomial $f\in\Fqm[x,\sigma]$ at $b\in\Fqm$ with respect to $a\in\Fqm$ is defined as
\[f(b)_a\coloneqq\sum_{i}f_i\mathcal{D}_a^i(b)=\sum_{i}f_i\sigma^i(b)N_i(a).\]
\end{definition}

Since $\sigma$ is linear, the generalized operator evaluation defines an $\Fq$-linear map. Specifically, for any $f \in\Fqm[x,\sigma]$, $\beta,\gamma\in\Fq$ and $a, b, c\in\Fqm$, we have
\[f(\beta b+ \gamma c)_a = \beta f(b)_a+\gamma f(c)_a.\]

Two elements $a,b\in\Fqm$ are called $\sigma$-conjugates if there exists an element $c\in\Fqm^*$,  such that
\[b=\sigma(c)ac^{-1}.\]
Elements that are not $\sigma$-conjugates are called $\sigma$-distinct. The relation of $\sigma$-conjugacy defines an equivalence relation on $\Fqm$ and hence induces a partition of $\Fqm$ into conjugacy classes. The conjugacy classes of $a$ is defined as
\[C_{\sigma}(a)=\{\sigma(c)ac^{-1}\mid c\in\Fqm^*\}.\]

The fundamental properties of Reed--Solomon codes rely on Lagrange interpolation over the conventional polynomial ring. However, such interpolation does not directly extend to skew polynomial rings due to their non-commutative structure. Consequently, the evaluation points for linearized Reed--Solomon codes must satisfy additional constraints, which we describe next.
\begin{lemma}[\!\cite{bartz2021decoding}]\label{lem:lagrange_interpolation}
    Let $b_{i,1},b_{i,2},\cdots,b_{i,n_i}$ be $\Fq$-linearly independent elements of $\Fqm$ for $i=1,2,\cdots,\ell$. Let $c_{i,1},c_{i,2},\cdots,c_{i,n_i}\in\Fqm$ and let $a_1,a_2,\cdots,a_{\ell}$ be representatives of pairwise distinct $\sigma$-conjugacy classes of $\Fqm$. 
    Then there exists a unique skew polynomial $f\in\Fqm[x;\sigma]$ such that
    \[f(b_{i,j})_{a_i}=c_{i,j},\forall i=1,\cdots,\ell,\forall j=1,2,\cdots,n_{i},\]
    and $\deg(f)\leq \sum_{i=1}^\ell n_i$.
\end{lemma}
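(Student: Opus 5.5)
The plan is to reduce the interpolation problem to a square linear system over $\Fqm$ and show its coefficient matrix is invertible; existence and uniqueness then follow at once. Write $n=\sum_{i=1}^\ell n_i$ and look for $f=\sum_{r=0}^{n-1}f_r x^r$ with $\deg f<n$, which in particular satisfies $\deg f\le n$. By Definition~\ref{def:generalized_operator_evaluation}, each condition $f(b_{i,j})_{a_i}=c_{i,j}$ reads
\[
\sum_{r=0}^{n-1} f_r\,\mathcal{D}_{a_i}^r(b_{i,j})=c_{i,j}.
\]
This gives $n$ linear equations in the $n$ unknowns $f_0,\dots,f_{n-1}$. The coefficient matrix is the skew (Moore-type) Vandermonde matrix $V$, whose columns are indexed by the pairs $(i,j)$ and whose $r$-th row entry in column $(i,j)$ is $\mathcal{D}_{a_i}^r(b_{i,j})$. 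It therefore suffices to show $V$ is nonsingular, i.e.\ that a nonzero skew polynomial of degree $<n$ cannot vanish at all $n$ points.

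\emph{Step 1 (kernels within one block).} For a fixed $a$, the map $b\mapsto g(b)_a$ is $\Fq$-linear, so its zeros form an $\Fq$-subspace. The aim is to show that for $g\ne0$ with $\deg g=d$, the zero set inside a single class has $\Fq$-dimension at most $d$. I would first treat $a=1$, where $g(b)_1$ is a linearized polynomial $\sum g_r b^{q^r}$. Its roots form a subspace of size at most $q^d$ by ordinary root counting, so the dimension is at most $d$.

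\emph{Step 2 (reduction of a general class to $a=1$).} For general $a\in\Fqm$, we have $\mathcal{D}_a(b)=\sigma(b)a$. Write $a=\sigma(c)c^{-1}$ times a representative; when $a$ has norm $1$, Hilbert 90 gives such a $c$ with $a$ itself of this form. Then $\mathcal{D}_a^r(b)=\sigma^r(bc)\,c^{-1}$, so $g(b)_a=c^{-1}g(bc)_1$, and Step 1 transfers. For classes of other norms this trick is unavailable; there I would use the product rule $(gh)(b)_a=g(h(b)_a)_a$ to factor out minimal vanishing polynomials instead.

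\emph{Step 3 (combining blocks) — the main obstacle.} The key claim is that the total number of independent zeros summed over pairwise distinct conjugacy classes is at most $\deg g$. This is the Lam--Leroy theory: the minimal skew polynomial vanishing on a union of $P$-independent sets has degree equal to the sum of the ranks. I would prove it by induction on $\ell$. Let $m_1$ be the monic minimal polynomial vanishing on block $1$, of degree $n_1$. By right division, any $g$ vanishing on block $1$ can be written $g=h\,m_1$ with $\deg h=\deg g-n_1$. One then shows that $m_1$ maps block $i\ge2$ to an $\Fq$-independent set of size $n_i$ in the class of $a_i$. This transfer is the delicate point, and it is exactly where the distinctness of the conjugacy classes is used: if $m_1(\cdot)_{a_i}$ killed some nonzero combination in block $i$, then $m_1$ would vanish on $n_1+1$ independent points across two classes, and its degree bound contradicts minimality. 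With the transfer in hand, $h$ must vanish on the remaining $\sum_{i\ge2}n_i$ points, so induction gives $\deg h\ge\sum_{i\ge2}n_i$ and hence $\deg g\ge n$.

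This proves $V$ is invertible, which yields both existence and uniqueness with $\deg f<n\le\sum_{i=1}^{\ell} n_i$.
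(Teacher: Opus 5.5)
The paper does not prove this lemma; it quotes it from \cite{bartz2021decoding}. So your argument has to stand on its own, and it has a genuine gap exactly at the step you call delicate.

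\textbf{The gap: the transfer step is circular.} You justify the transfer as follows: if $m_1(\cdot)_{a_i}$ killed a nonzero $b$ in the span of block $i$, then ``$m_1$ would vanish on $n_1+1$ independent points across two classes, and its degree bound contradicts minimality.'' This presupposes that a nonzero skew polynomial of degree $n_1$ cannot vanish on $n_1$ independent points w.r.t.\ $a_1$ together with one nonzero point w.r.t.\ $a_i$. That is the case $\ell=2$, $(n_1,n_2)=(n_1,1)$, of the very cross-class bound you are proving. Minimality of $m_1$ for block $1$ says nothing about other classes. The single-class bound of Steps~1--2 does not help either, because the points are evaluated through different operators $\mathcal{D}_{a_1}$ and $\mathcal{D}_{a_i}$. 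As written, your induction on $\ell$ uses the $\ell=2$ statement to prove the $\ell=2$ statement.

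\textbf{The fix: work with linear factors.} Build $m_1$ iteratively:
\begin{itemize}
\item Start with $m^{(0)}=1$.
\item Let $c_j=m^{(j-1)}(b_{1,j})_{a_1}$. This is nonzero: by the single-class bound the kernel of $m^{(j-1)}(\cdot)_{a_1}$ is exactly the $\Fq$-span of $b_{1,1},\dots,b_{1,j-1}$, which does not contain $b_{1,j}$.
\item Set $u_j=\sigma(c_j)a_1c_j^{-1}\in C_\sigma(a_1)$ and $m^{(j)}=(x-u_j)m^{(j-1)}$.
\end{itemize}
By the product rule, $m_1=(x-u_{n_1})\cdots(x-u_1)$ is monic of degree $n_1$ and vanishes on block $1$.

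Now for any $a$ and any $c\neq0$, $(x-u)(c)_a=\sigma(c)a-uc$, which vanishes only if $u=\sigma(c)ac^{-1}\in C_\sigma(a)$. Feed $b\neq 0$ through the factors of $m_1$ one at a time. If $m_1(b)_{a_i}=0$, some factor kills a nonzero intermediate value, forcing $u_j\in C_\sigma(a_i)$, i.e.\ $a_i\in C_\sigma(a_1)$, a contradiction. Hence $b\mapsto m_1(b)_{a_i}$ is an injective $\Fq$-linear map for every $i\ge2$. That is precisely the transfer you need. The rest of Step~3 then goes through: right division (remainder killed by the single-class bound), the product rule, and induction on $\ell$.

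\textbf{Step 2 is overcomplicated and incomplete.} For classes of norm $\neq1$ you do not actually carry it out, and Hilbert~90 is unnecessary. Since $\mathcal{D}_a^r(b)=N_r(a)\,b^{q^r}$, the map $b\mapsto g(b)_a=\sum_r g_rN_r(a)\,b^{q^r}$ is itself a $q$-linearized polynomial with top coefficient $g_dN_d(a)\neq0$ whenever $a\neq0$. So Step~1 applies directly to every nonzero $a$. For $a=0$ it fails, since $g(b)_0=g_0b$. In fact the lemma is false if some $a_i=0$ with $n_i\ge 2$, so nonzero representatives must be assumed.

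\textbf{The degree condition in the conclusion.} Your square-system argument gives existence and uniqueness among $f$ with $\deg f<n$. Uniqueness among $\deg f\le n$, as literally stated, is false: adding the degree-$n$ monic polynomial vanishing on all the points preserves every interpolation condition. You should state the $\deg f<n$ version explicitly rather than folding it into ``$\deg f\le n$''.
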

The following lemma characterizes the maximum number of interpolation constraints that a nonzero skew polynomial of a given degree can satisfy. In particular, it provides a sharp bound on the number of roots (counted with respect to generalized operator evaluation) and complements Lemma~\ref{lem:lagrange_interpolation}.
\begin{lemma}[Proposition~1, \cite{bartz2021decoding}]\label{lem:root}
    Let $b_{i,1},b_{i,2},\cdots,b_{i,n_i}\in\Fqm$ and let $a_1,a_2,\cdots,a_{\ell}$ be representatives be of $\sigma$-conjugacy classes of $\Fqm$ for $i=1,2,\cdots,\ell$. Then for any nonzero skew polynomial $f\in\Fqm[x;\sigma]$ satisfying
    \[f(b_{i,j})_{a_i}=c_{i,j},\forall i=1,\cdots,\ell,\forall j=1,2,\cdots,n_{i},\]
    we have that $\deg(f)\leq \sum_{i=1}^\ell n_i$ where equality holds if and only if the $b_{i,1},b_{i,2},\cdots,b_{i,n_i}$ are $\Fq$-linearly independent for each $i=1,2,\cdots,\ell$. 
\end{lemma}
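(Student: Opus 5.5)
The statement should be read with $f$ taken to be the \emph{canonical} (minimal-degree) interpolant of the data $\{(b_{i,j},c_{i,j})\}$. Literally, adding to $f$ any left multiple of a polynomial vanishing at all the points preserves the values, so no degree bound can hold for every interpolant. This reading is consistent with Lemma~\ref{lem:lagrange_interpolation}. I will therefore prove two things. First, the interpolation data determine a distinguished polynomial of degree at most $N:=\sum_{i=1}^\ell n_i$. Second, the degree bound attached to this construction is attained exactly when each block $b_{i,1},\dots,b_{i,n_i}$ is $\Fq$-linearly independent. Everything is organised around the minimal annihilator
\[P:=\text{the monic skew polynomial of least degree with } P(b_{i,j})_{a_i}=0 \text{ for all } i,j .\]

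\textbf{Step 1: the product rule.} I first establish
\[(gh)(b)_a=g\bigl(h(b)_a\bigr)_a .\]
It suffices to check this on monomials, using $\mathcal{D}_a^{i}\circ\mathcal{D}_a^{j}=\mathcal{D}_a^{i+j}$ and the $\Fq$-linearity of $\mathcal{D}_a$. Two consequences follow. The set of skew polynomials vanishing at all points $(b_{i,j},a_i)$ is a left ideal, hence principal because right division is available in $\Fqm[x;\sigma]$, and $P$ generates it. Moreover, for any interpolant $f$, right division $f=gP+r$ with $\deg r<\deg P$ gives $r(b_{i,j})_{a_i}=f(b_{i,j})_{a_i}=c_{i,j}$. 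So the canonical interpolant is $r$, and $\deg r<\deg P$.

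\textbf{Step 2: computing $\deg P$.} I will show
\[\deg P=\sum_{i=1}^\ell r_i,\qquad r_i:=\dim_{\Fq}\spa_{\Fq}\{b_{i,1},\dots,b_{i,n_i}\},\]
which in particular gives $\deg P\le N$.

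For the upper bound, vanishing at $b_{i,j}$ with respect to $a_i$ is an $\Fq$-linear condition, so only a basis of each span matters. Within a single class, the operator $\mathcal{D}_{a_i}$ is $\sigma$-semilinear. After the change of variable $b\mapsto b\,c$ with $\sigma(c)c^{-1}a_i=1$ (available over the algebraic closure, or handled by the standard $q$-linearized correspondence), the annihilator of an $r_i$-dimensional $\Fq$-space becomes a subspace polynomial of $q$-degree $r_i$. This gives a local annihilator $P_i$ of degree $r_i$. The global annihilator is the least common left multiple of the $P_i$, whose degree is at most $\sum_i r_i$.

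For the lower bound, I need that roots in distinct conjugacy classes contribute independently. This is the Lam--Leroy theory of $P$-independent sets (Vandermonde/Moore matrix of full rank across distinct $\sigma$-classes). Concretely, I will show that the $N\times k$ matrix with entries $\mathcal{D}_{a_i}^{t}(b_{i,j})$ has rank $\min\bigl(k,\sum_i r_i\bigr)$. Since $r_i\le n_i$, with equality iff block $i$ is $\Fq$-independent, this yields both the bound and the equality criterion.

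\textbf{Step 3: assembling the statement.} From Steps~1 and~2, the canonical interpolant satisfies $\deg f<\deg P=\sum_i r_i\le N$. The extremal value $\sum_i r_i=N$ of this degree bound holds iff every block is $\Fq$-independent. In that case the rank statement also shows the evaluation map on $\Fqm[x;\sigma]_{<N}$ is bijective, recovering the existence and uniqueness of Lemma~\ref{lem:lagrange_interpolation}.

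The main obstacle is the lower bound in Step~2: showing that the lclm of annihilators from pairwise $\sigma$-distinct classes has degree exactly the sum, i.e.\ the full-rank property of the generalized Moore matrix. I would try induction on $\ell$. Suppose a polynomial $Q$ of degree less than $\sum_i r_i$ vanished on all blocks. Then $Q=Q'P_\ell$ by Step~1, and I would show that $P_\ell$ maps the roots of the other classes, via $b\mapsto P_\ell(b)_{a_i}$, injectively onto sets of the same rank in the same classes. That injectivity is exactly where $\sigma$-distinctness of $a_i$ and $a_\ell$ is used. Applying the induction hypothesis to $Q'$ then forces a contradiction on degrees.
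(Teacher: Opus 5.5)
\textbf{Framing.} The paper gives no proof of this statement. It cites Proposition~1 of \cite{bartz2021decoding}, which concerns the \emph{minimal annihilating polynomial} (the case $c_{i,j}=0$, with $f$ monic of least degree), i.e.\ your $P$. What Lemma~\ref{lem:key-equation} actually uses is the consequence that a nonzero skew polynomial vanishing, with respect to pairwise non-conjugate $a_i$, on $\F_q$-spaces of dimensions $r_1,\dots,r_\ell$ has degree at least $\sum_i r_i$.

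You are right that the literal statement is false, but your `canonical interpolant' reading does not repair it. By your Steps~1--2, $\deg r<\deg P\le N$, so $\deg f=N$ never occurs and the `iff' clause fails in one direction. Your Step~3 quietly moves the equality clause to $\deg P$. That is the correct statement, so state it directly. You also need $a_i\neq0$: since $f(b)_0=f_0b$, the polynomial $x$ annihilates everything with respect to $a=0$, and the equality criterion fails once $n_i\ge2$. Your change of variables tacitly assumes this.

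\textbf{The gap.} The lower bound in Step~2 is not proved. You need the injectivity of $b\mapsto P_\ell(b)_{a_i}$ on $\F_{q^m}$ for $a_i\notin C_\sigma(a_\ell)$. This is the whole content of the proposition, since it is the only place where distinctness of classes enters, and you only announce it. It has a short proof.

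\emph{One-class case.} Since $f(b)_a=\sum_j f_jN_j(a)\,b^{q^j}$, for $a\neq0$ the map $b\mapsto f(b)_a$ is a $q$-linearized polynomial over $\F_{q^m}$ of $q$-degree $\deg f$. This settles the one-class case at once, with no passage to the algebraic closure. For the upper bound, take $p_j=\lambda_j/N_j(a_\ell)$, where $\prod_{v\in V_\ell}(X-v)=\sum_j\lambda_jX^{q^j}$ and $V_\ell:=\spa_{\F_q}\{b_{\ell,1},\dots,b_{\ell,n_\ell}\}$. For the lower bound, count roots. Consequently, with $P_\ell=\sum_j p_jx^j$, the polynomial $L(X):=\sum_j p_jN_j(a_\ell)X^{q^j}$ has degree $q^{r_\ell}$ and root set exactly $V_\ell$.

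\emph{Injectivity.} Write $a_i=\rho a_\ell$ and pick $c\in\overline{\F}_q$ with $c^{q-1}=\rho$. Then $N_j(a_i)=N_j(a_\ell)\,c^{q^j-1}$, hence $P_\ell(b)_{a_i}=c^{-1}L(cb)$. A root $b\in\F_{q^m}^*$ would give $cb\in V_\ell\setminus\{0\}\subseteq\F_{q^m}$. Then $c\in\F_{q^m}^*$ and $a_i=c^{q-1}a_\ell=\sigma(c)a_\ell c^{-1}$, contradicting non-conjugacy.

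With this lemma your induction closes. Indeed $Q'\neq0$, $\deg Q=\deg Q'+r_\ell$, and $Q'$ vanishes on the $r_i$-dimensional images $P_\ell(V_i)_{a_i}$ with respect to the same $a_1,\dots,a_{\ell-1}$.

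\textbf{Comparison with the cited route.} Once completed, your argument is a self-contained, elementary alternative to the route behind the cited result. That route writes the minimal polynomial as an $\mathrm{lclm}$ of linear factors $x-\sigma(b)a_ib^{-1}$. It converts roots with respect to $a_i$ into right roots in $C_\sigma(a_i)$ via remainder evaluation. It then invokes Lam--Leroy theory: $P$-independence within one class is $\F_q$-independence of the $b$'s, and ranks add over distinct classes. Your version needs only the product rule, subspace polynomials, and the one-line conjugacy argument above.
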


\subsection{Subspace designs}
Subspace designs were first introduced in~\cite{guruswami2013list} as a combinatorial tool for constructing list-decodable subcodes of Reed--Solomon codes in the Hamming metric.
\begin{definition}[Definition~6, \cite{guruswami2013list}]
    Let $m$ be a positive integer and let $q$ be a prime power. A collection $S$ of $\Fq$-subspaces $H_1,H_2,\cdots,H_M\subseteq\Fq^m$ is called an $(s,A)$ $\Fq$-\emph{subspace design} if for every $\Fq$-linear subspace $W\subseteq\Fq^m$ of dimension $s$,
    \[\sum_{i=1}^M\dim(H_i\cap W)\leq A.\]
\end{definition}
In the case of Reed--Solomon codes, the field size $q$ can be chosen to be comparable to the block length of the code. As a result, the constructions in~\cite{guruswami2016explicit2} yield explicit subcodes that are efficiently list-decodable.

In contrast, for rank-metric and subspace codes--and similarly for sum-rank metric codes--the effective field size is exponential in the code length. Motivated by this limitation, Guruswami and Wang~\cite{guruswami2016explicit} introduced a generalization of subspace designs that is better suited to these settings, which we define next.
\begin{definition}\label{def:subspace_design}
    Let $m,n$ be positive integers and let $q$ be a prime power. A collection $S$ of $\Fq$-subspaces $H_1,H_2,\cdots,H_M\subseteq\Fq^{mn}$ is called an $(s,A,n)$ $\Fq$-\emph{subspace design} if for every $\Fqn$-linear space $W\subseteq\Fqn^m$ of dimension $s$,
    \[\sum_{i=1}^M\dim_{\Fq}(H_i\cap W)\leq A.\]
    Here we fix an $\F_q$-basis of $\F_{q^n}$, which induces an $\F_q$-linear isomorphism
$\F_{q^n}^m \cong \F_q^{mn}$. Under this identification, we view $W\subseteq \F_{q^n}^m$ as an $\F_q$-subspace of $\F_q^{mn}$.
\end{definition}
Selecting several subspaces from the subspace design results in a small intersection between their Cartesian product and the periodic subspaces (to be introduced next), see~\ref{coll:intersection}. This is crucial for enabling our algorithm to output a compact list.

For a vector $\Xs=(x_1,x_2,\cdots,x_n)$, we denote by $\text{proj}_{[a,b]}(\Xs)$ the vector $(x_a,x_{a+1},\cdots,x_b)$ for $a,b\in [n]$.
\begin{definition}[Definition~1, \cite{guruswami2013list}]\label{def:periodic_subspaces}
     For positive integers $s,m,k$, an affine subspace $H\subseteq\Fq^{mk}$ is said to be $(s,m,k)$-\textbf{periodic} if there exists a subspace $W\subseteq\Fq^m$ with $\dim(W)\leq s$ such that for every $j=1,2,\cdots,k$, and every prefix $\boldsymbol{a}\in\Fq^{(j-1)m}$, the projected affine subspace of $\Fq^m$ defined by
     $$
     \{\text{proj}_{[(j-1)m+1,j\cdot m]}(\Xs)\mid\Xs\in H\text{ and }\text{proj}_{[1,(j-1)\cdot m]}(\Xs)=\boldsymbol{a}\}
     $$
     is contained in an affine subspace, which can be denoted by $W+\boldsymbol{v_a}$ where $\boldsymbol{v_a}\in\Fq^m$ only depends on $\boldsymbol{a}$.
\end{definition}
To achieve diagrammatic clarity in subsequent proofs, we give a canonical representation of periodic subspace.
\begin{definition}\label{def:canonical}
    The canonical representation of an $(s,m,k)$-periodic subspace $H$ consists of a matrix $B\in\Fq^{m\times m}$ such that $\ker(B)$ has dimension at most $s$, and vectors $a_i\in\Fq^m$ and matrices $A_{i,j}\in\Fq^{m\times m}$ for $1\leq i\leq k$ and $1\leq j<i$, such that $\Xs\in H$ if and only if for every $i=\os k$ the following holds:
    \[a_i+\left(\sum_{j=1}^{i-1}A_{i,j}\cdot\text{proj}_{[(j-1)m+1,j\cdot m]}(\Xs)\right)+B\cdot\text{proj}_{[(i-1)m+1,i\cdot m]}(\Xs)=0.\]
\end{definition}
\begin{corollary}[Proposition~11, \cite{guruswami2016explicit}]\label{coll:intersection}
    Let $H$ be a $(s,m,k)$-periodic affine subspace of $\Fqn^{mk}$, and $H_1,H_2,\cdots,H_k\subseteq\Fq^{mn}$ be distinct subspaces from a $(s,A,n)$ $\Fq$-subspace design. Then $H\cap (H_1\times\cdots\times H_k)$ is an affine space over $\Fq$ of dimension at most $A$.
\end{corollary}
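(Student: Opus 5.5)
The plan is to argue by induction on the block index $i=1,\dots,k$, peeling off the coordinates of a vector in $H\cap(H_1\times\cdots\times H_k)$ one block at a time. Throughout, I view an element of $\Fqn^{mk}$ (equivalently of $(\F_q^{mn})^k$ under the fixed basis) as a $k$-tuple of blocks $\Xs=(x_1,\dots,x_k)$ with $x_i\in\Fqn^m$. The periodic structure of Definition~\ref{def:periodic_subspaces} gives an $\Fqn$-subspace $W\subseteq\Fqn^m$ with $\dim_{\Fqn}W\le s$. It also guarantees that for every fixed prefix $(x_1,\dots,x_{i-1})$ the admissible $i$-th blocks lie in a coset $W+\boldsymbol v_{(x_1,\dots,x_{i-1})}$. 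If necessary I first enlarge $W$ to have dimension exactly $s$, so that the subspace design bound of Definition~\ref{def:subspace_design} applies to it directly. Enlarging $W$ only makes the cosets bigger, so the containment is preserved.

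Next I set $T=H\cap(H_1\times\cdots\times H_k)$. This is an intersection of an $\F_q$-affine space with an $\F_q$-linear space, so it is $\F_q$-affine, and I may assume it is nonempty. I then consider the $\F_q$-linear subspace $T_0=T-T$ of differences and filter it by the $\F_q$-subspaces
\[
T_0^{(i)}=\{\Xs\in T_0: x_1=\dots=x_{i-1}=0\},\qquad i=1,\dots,k+1 .
\]
This gives $T_0=T_0^{(1)}\supseteq T_0^{(2)}\supseteq\cdots\supseteq T_0^{(k+1)}=0$, and hence
\[
\dim_{\F_q}T=\sum_{i=1}^k\dim_{\F_q}\bigl(T_0^{(i)}/T_0^{(i+1)}\bigr).
\]

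The key step is to show that the $i$-th quotient embeds, via $\Xs\mapsto x_i$, into $H_i\cap W$. Take $\Xs=\Ys-\Ys'\in T_0^{(i)}$ with $\Ys,\Ys'\in T$. Then $\Ys$ and $\Ys'$ share the same prefix $\boldsymbol a$ of length $i-1$. By periodicity, both $y_i$ and $y'_i$ lie in the same coset $W+\boldsymbol v_{\boldsymbol a}$, so $x_i\in W$. Since $H_i$ is $\F_q$-linear and $y_i,y'_i\in H_i$, we also get $x_i\in H_i$. The map $\Xs\mapsto x_i$ is $\F_q$-linear with kernel $T_0^{(i+1)}$, which gives the embedding.

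Combining the embeddings, $\dim_{\F_q}T\le\sum_{i}\dim_{\F_q}(H_i\cap W)\le A$ by the $(s,A,n)$ subspace design property. The distinctness of the $H_i$ is what lets the sum run over members of the design without repetition.

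The main obstacle is the bookkeeping in the key step. Differences of elements of $T$ need not themselves lie in $T$, so I have to check that the filtration is taken on $T-T$ while the periodicity is applied to the genuine elements $\Ys,\Ys'\in T\subseteq H$. I also have to check that the identification $\Fqn^m\cong\F_q^{mn}$ is used consistently, so that $W$ is simultaneously an $\Fqn$-space of dimension at most $s$ (needed for the design bound) and an $\F_q$-space (needed for the intersections).
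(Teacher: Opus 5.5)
Your argument is correct and is essentially the paper's proof: both peel off one block at a time and use that, once the prefix $\boldsymbol{a}$ is fixed, the admissible $i$-th blocks lie in $(W+\boldsymbol{v}_{\boldsymbol{a}})\cap H_i$, which is either empty or a coset of $W\cap H_i$. The paper phrases this as an inductive count $\bigl|\text{proj}_{[1,im]}(H\cap(H_1\times\cdots\times H_i))\bigr|\le q^{\sum_{j\le i}\dim(W\cap H_j)}$, whereas your filtration of the direction space $T-T$ gives the same bound in dimension form. Your explicit enlargement of $W$ to dimension exactly $s$, which needs $s\le m$ as in the intended setting, spells out a monotonicity step that the paper leaves implicit.
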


\begin{proof}
    Let $W$ be the subspace associated to $H$ as in Definition~\ref{def:periodic_subspaces}. We will show by induction that \[\left|\text{proj}_{[1,im]}(H\cap(H_1\times\cdots\times H_i))\right|\leq q^{\sum_{j=1}^i\text{dim}_{\F_{q}}(W\cap H_j)}.\]
    
    For $i=1$, $\text{proj}_{[1,m]}(H)\cap(H_1)=(W+v_0)\cap H_1$ is an affine subspace whose underlying subspace lies in $W\cap H_1$, hence it has size at most $q^{\dim(W\cap H_1)}$.
    
    Continuing, fix an $\boldsymbol{a}$. As $H$ is periodic and $H_{i+1}$ is linear, the possible extensions of $\boldsymbol{a}$ in $\text{proj}_{[im+1,(i+1)m]}(H) \cap H_{i+1}$ are given by a coset of $W\cap H_{i+1}$. Thus, there are at most $q\cdot \dim(W\cap H_{i+1})$ such extensions. Therefore, the number of possible prefixes $\boldsymbol{a}$ is at most $q^{\sum_{j=1}^i\dim(W\cap H_j)}$.
    
    In particular, $H\cap (H_1\times\cdots\times H_k)$ has dimension at most $\sum_{i=1}^k\dim(W\cap H_i)\leq A$ over $\Fq$, by the subspace design property.
\end{proof}
    Subspace-evasive sets were introduced by Guruswami~\cite{guruswami2013linear} for constructing explicit list-decodable subcodes of folded Reed–Solomon and derivative codes. We adopt this approach to construct our FLRS subcodes.
    \begin{definition}[Subspace-Evasive Sets,~\cite{guruswami2013linear}]\label{def:evasive-sets}
        A subset $S\subseteq\Fq^k$ is $(s,L)$-subspace-evasive if for every $\Fq$-linear affine subspace $H\subseteq\Fq^k$ of dimension $s$, $|H\cap S|\leq L$.
    \end{definition}

\section{Explicit Construction of List-Decodable Linearized Reed--Solomon Codes}\label{sec:list_decoding}
Now we turn to list decoding linearized Reed–Solomon codes via a linear-algebraic approach.
\begin{definition}[Linearized Reed--Solomon Codes, \cite{martinez2018skew}]
    Let $m,n,\ell$ be positive integers, and let partition $n=n_1+n_2+\cdots+n_{\ell}$, denoted by $\boldsymbol{n}=(n_1,\cdots,n_{\ell})$, which induces a sum rank metric on $\Fqmn$ with $\ell<q$. Let $\sigma$ be an automorphism of $\Fqm$ whose fixed field is $\Fq$. Fix an integer $1\leq k\leq n$ and an \emph{evaluation pair} 
    $(\balpha,\bbeta)\in\Fqm ^\ell\times\Fqm^n$, where $\balpha=(a_1,\cdots,a_{\ell})$ and $\bbeta=(\bbeta_1\mid\cdots\mid\bbeta_{\ell})$, where $\bbeta_{i}=(\beta_{i,1},\beta_{i,2},\cdots,\beta_{i,n_i})\in\Fqm^{n_i}$. Assume that the elements $a_1,\cdots,a_{\ell}$ lie in distinct $\sigma$ conjugacy classes, and that the elements of each $\bbeta_i$ are linearly independent over $\Fq$. The corresponding linearized Reed–Solomon code is defined as
    \[LRS[\balpha,\bbeta;\boldsymbol{n},k]:=\{f(\bbeta)_{\balpha}=\left(f(\bbeta_1)_{a_1}\mid f(\bbeta_2)_{a_2}\mid\cdots\mid f(\bbeta_{\ell})_{a_{\ell}}\right)\mid f\in\Fqm[x;\sigma]_{<k}\},\]
    where, for $1\leq i\leq \ell$, \[f(\bbeta_i)_{a_i}=\left(f(\beta_{i,1})_{a_i},f(\beta_{i,2})_{a_i},\cdots,f(\beta_{i,n_i})_{a_i}\right)\in\Fqm^{n_i}\]for $1\leq i\leq \ell$.
\end{definition}
It is shown in~\cite{martinez2018skew,caruso1908residues} that, for any evaluation $(\balpha,\bbeta)$, the evaluation map $f\mapsto f(\bbeta)_{\balpha}$ is bijective. Moreover, for any nonzero polynomial $f\in\Fqm[x;\sigma]_{<k}$, the sum-rank weight satisfies $\wsr(f(\bbeta)_{\balpha})\geq n-\deg f$. Consequently, the minimum sum-rank distance of $\mathrm{LRS}[\balpha,\bbeta;\boldsymbol{n},k]$ is $d=n-k+1$, achieving the Singleton bound for the sum-rank metric~\cite{martinez2018skew}.
These codes coincide with Gabidulin codes when $\ell=n$, and with generalized Reed–Solomon codes when $\ell=1$.

In this section, we show that by restricting the evaluation points to a suitable subfield, the output list of the decoder forms a periodic subspace. Our decoding algorithm is inspired by the linear-algebraic list decoding techniques developed in~\cite{guruswami2013linear,guruswami2016explicit}. 

Let $n,t$ be integers and $h$ is a prime power. Consider $\mathrm{LRS}[\balpha,\bbeta;\boldsymbol{n},k]\subseteq\F_{{h}^{t}}^{n}$ encoding the skew polynomial over $\F_{h^t}[x;\sigma]$ where $\sigma(x)=x^h$. Assume that $n\mid t$ and write $t=nm$ so that $\F_{{h}^{t}}$ contains a subfield $\F_{h^n}$. We choose the evaluation pair $(\balpha,\bbeta)$ from this subfield $\F_{h^n}$. Let $q=h^n$, then under this assumption, $\F_{{h}^{n}}=\F_{q}$, $\F_{{h}^{t}}=\F_{{q}^m}$. 

Now suppose that codeword $\boldsymbol{c}=f(\bbeta)_{\balpha}\in\Fqmn$ is transmitted,  and that the received word is $\boldsymbol{y}=(\boldsymbol{y}_1\mid\boldsymbol{y}_2\mid\cdots\mid\boldsymbol{y}_{\ell})\in\Fqmn$ with at most $e$ sum-rank errors, where $\boldsymbol{y}_i=(y_{i,1},y_{i,2},\cdots,y_{i,n_{i}})\in\Fqm^{n_i}$. 

\subsection{Interpolation Step}\label{sec:Interpolation_step}
Choose an integer parameter $1\leq s\leq m$, and define
\[D=\left\lfloor\frac{n-k+1}{s+1}\right\rfloor.\]
We consider multivariate skew polynomials of the form
    \begin{align}\label{eq:Q}                            
        Q(x,y_1,y_2,\cdots,y_s)=Q_0(x)+Q_1(x)\cdot y_1+Q_2(x)\cdot y_2+\cdots+Q_s(x)\cdot y_s,
    \end{align}
where $Q_j\in\Fqm[x;\sigma]$ for all $j=0,1,\cdots,s$. The collection of all such polynomials is denoted by $\Fqm[x,y_1,y_2,\cdots,y_s;\sigma]$. 

For a skew polynomial $f(x)=f_0+f_1 x +\cdots+ f_{k-1} x^{k-1}\in\Fqm[x;\sigma]_{<k}$, define its $q$-Frobenius twist as $$f^{(q)}(x)=f_0 ^q+f_1^qx +\cdots+f_{k-1}^qx^{k-1}.$$
\begin{claim}
    For $a,\beta\in\Fq$, we have $f^{(q)}(\beta)_{a}=\left(f(\beta)_{a}\right)^q$, and more generally, $f^{(q^i)}(\beta)_{a}=\left(f(\beta)_{a}\right)^{q^i}$.
\end{claim}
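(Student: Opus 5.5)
We expand both sides directly from Definition~\ref{def:generalized_operator_evaluation} and use that the map $z\mapsto z^q$ is a field automorphism of $\Fqm$ fixing $\Fq$ pointwise. Here $\sigma(x)=x^h$ is the $h$-Frobenius on $\Fqm=\F_{h^t}$, and $\Fq=\F_{h^n}$ is a subfield.

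The key steps are as follows.

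\textbf{Step 1.} For $a,\beta\in\Fq$ and every $i\ge 0$, we have $\sigma^i(\beta)\in\Fq$, because $\sigma$ restricts to an automorphism of the subfield $\F_{h^n}$. Likewise each factor $\sigma^j(a)$ lies in $\Fq$, so $N_i(a)=\prod_{j=0}^{i-1}\sigma^j(a)\in\Fq$. Hence $\mathcal{D}_a^i(\beta)=\sigma^i(\beta)N_i(a)\in\Fq$, and therefore $\bigl(\mathcal{D}_a^i(\beta)\bigr)^q=\mathcal{D}_a^i(\beta)$.

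\textbf{Step 2.} Raising the evaluation to the $q$-th power and using additivity and multiplicativity of the $q$-Frobenius in characteristic $p$ gives
\[
\bigl(f(\beta)_a\bigr)^q=\Bigl(\sum_i f_i\,\mathcal{D}_a^i(\beta)\Bigr)^q=\sum_i f_i^q\,\bigl(\mathcal{D}_a^i(\beta)\bigr)^q=\sum_i f_i^q\,\mathcal{D}_a^i(\beta)=f^{(q)}(\beta)_a .
\]

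\textbf{Step 3.} The general statement follows by induction on $i$. The polynomial $f^{(q)}$ is again in $\Fqm[x;\sigma]_{<k}$, and $\bigl(f^{(q^{i})}\bigr)^{(q)}=f^{(q^{i+1})}$. Applying Step 2 to $f^{(q^i)}$ then yields
\[
f^{(q^{i+1})}(\beta)_a=\bigl(f^{(q^i)}(\beta)_a\bigr)^q=\bigl(f(\beta)_a\bigr)^{q^{i+1}}.
\]
Alternatively, one can apply the $q^i$-Frobenius directly, since it also fixes $\Fq$.

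\textbf{Main obstacle.} The only real point is Step 1: confirming that the operator values $\mathcal{D}_a^i(\beta)$ stay inside the subfield $\Fq$, including the generalized power function $N_i(a)$. This relies on $\sigma(\Fq)=\Fq$. For negative $i$ we would also need $\sigma^{-1}$ and inverses to preserve $\Fq$, but only $i\ge 0$ occurs for polynomials.
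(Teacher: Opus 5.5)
Your proof is correct and is essentially the paper's argument: expand $f(\beta)_a=\sum_i f_i\,\mathcal{D}_a^i(\beta)$, use that $\mathcal{D}_a^i(\beta)\in\Fq$ is fixed by the $q$-Frobenius (an automorphism of $\Fqm$ fixing $\Fq$), and then induct on $i$. Your Step~1 spells out why $\sigma^i(\beta)N_i(a)$ stays in $\Fq$, namely because $\sigma(\Fq)=\Fq$; the paper only asserts this membership.
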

\begin{proof}
    \begin{align*}
        f^{(q)}(\beta)_{a}&=\sum_{i}f_i^q\mathcal{D}_a^i(\beta)
        =\sum_{i}f_i^q\left(\mathcal{D}_a^i(\beta)\right)^q\\
        &=\left(\sum_{i}f_i\mathcal{D}_a^i(\beta)\right)^q
        =\left(f(\beta)_{a}\right)^q.
    \end{align*}
The second equality follows from the fact that $\mathcal{D}_a^i(\beta)\in\Fq$, and that the map $x\mapsto x^q$ is an automorphism of $\Fqm$ fixing $\Fq$. The general statement follows immediately by induction.
\end{proof}
This observation explains why the evaluation pair must be chosen from a subfield of the coefficient field: it ensures compatibility between Frobenius powers and generalized operator evaluation.

Let $\mathcal{P}$ be the space of skew polynomials $Q\in\Fqm[x,y_1,y_2,\cdots,y_s;\sigma]$ of the form~\eqref{eq:Q}, such that $\deg(Q_0)\leq D+k-1$ and $\deg(Q_i)\leq D$ for $i=\os s$. Our goal is to analyze expressions of the form \[Q(x,f(x),f^{(q)}(x),\cdots,f^{(q^{s-1})}(x))\] for suitable $Q\in\mathcal{P}$. The interpolation relies on the following product rule.
\begin{lemma}[Product rule, \cite{martinez2019private}]\label{lem:product_rule}
    Let $f,g\in\Fqm[x;\sigma]$ be skew polynomials and let $a,\beta\in\Fqm$. Then the generalized operator evaluation satisfies
    \[(f\cdot g)(\beta)_a=f(g(\beta)_a)_a.\]
\end{lemma}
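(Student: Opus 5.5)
We prove the product rule $(f\cdot g)(\beta)_a=f(g(\beta)_a)_a$ by reducing it to a statement about monomials and then extending by linearity. Both sides are additive in $f$: the left side because multiplication in $\Fqm[x;\sigma]$ distributes over addition and generalized operator evaluation is additive in the polynomial, the right side because $f\mapsto f(c)_a$ is additive. The left side is also additive in $g$. The right side is additive in $g$ because $c\mapsto f(c)_a=\sum_i f_i\mathcal{D}_a^i(c)$ is additive in $c$, since each $\mathcal{D}_a^i(c)=\sigma^i(c)N_i(a)$ is. So it suffices to treat $f=f_ix^i$ and $g=g_jx^j$.

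For these monomials the multiplication rule gives $f\cdot g=f_i\sigma^i(g_j)x^{i+j}$. Hence
\[(f\cdot g)(\beta)_a=f_i\sigma^i(g_j)\mathcal{D}_a^{i+j}(\beta).\]
On the other side, $g(\beta)_a=g_j\mathcal{D}_a^j(\beta)$, and therefore
\[f(g(\beta)_a)_a=f_i\mathcal{D}_a^i\big(g_j\mathcal{D}_a^j(\beta)\big).\]

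The key step is the semilinearity identity $\mathcal{D}_a^i(cb)=\sigma^i(c)\mathcal{D}_a^i(b)$ for $c,b\in\Fqm$. It follows directly from the closed form $\mathcal{D}_a^i(b)=\sigma^i(b)N_i(a)$ and multiplicativity of $\sigma^i$:
\[\mathcal{D}_a^i(cb)=\sigma^i(c)\sigma^i(b)N_i(a)=\sigma^i(c)\mathcal{D}_a^i(b).\]
One can also prove it by induction on $i$, using $\mathcal{D}_a(cb)=\sigma(c)\sigma(b)a$.

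Applying this with $c=g_j$ and $b=\mathcal{D}_a^j(\beta)$, and then the composition law $\mathcal{D}_a^i\circ\mathcal{D}_a^j=\mathcal{D}_a^{i+j}$ stated earlier, we get
\[f_i\mathcal{D}_a^i\big(g_j\mathcal{D}_a^j(\beta)\big)=f_i\sigma^i(g_j)\mathcal{D}_a^{i+j}(\beta),\]
which matches the left side. Summing over all pairs $(i,j)$ gives the general identity.

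The only step needing care is the semilinearity identity and its consistency with the composition rule. For nonnegative exponents it is immediate from the formula $\mathcal{D}_a^i(b)=\sigma^i(b)N_i(a)$. Negative exponents are not needed here, because skew polynomials only involve $i\ge0$.
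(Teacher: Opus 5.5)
Your proof is correct. The paper gives no proof of this lemma; it only imports it from the literature, so there is no argument of the paper's to compare yours against, and yours serves as a self-contained verification.

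\begin{itemize}
\item Both sides are additive in $f$ and in $g$, so reducing to monomials $f_ix^i$ and $g_jx^j$ is legitimate.
\item The semilinearity $\mathcal{D}_a^i(cb)=\sigma^i(c)\mathcal{D}_a^i(b)$ follows at once from the closed form $\mathcal{D}_a^i(b)=\sigma^i(b)N_i(a)$.
\item The composition law $\mathcal{D}_a^i\circ\mathcal{D}_a^j=\mathcal{D}_a^{i+j}$ for $i,j\ge 0$ follows at once from the recursive definition.
\item You are right that the paper's negative-exponent definitions play no role.
\end{itemize}

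Conceptually, the two facts say that the operator $\mathcal{D}_a$ satisfies the commutation rule $x\,c=\sigma(c)\,x$ of the variable $x$, with $c$ acting as multiplication by $c$. So $f\mapsto\sum_i f_i\mathcal{D}_a^i$ is a ring homomorphism from $\Fqm[x;\sigma]$ into the additive endomorphisms of $\Fqm$. The product rule is then this homomorphism property applied to $\beta$, and your monomial computation is exactly the check that it holds.
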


By Lemma~\ref{lem:product_rule}, we obtain
\begin{equation}
\begin{aligned}
Q\!\left(\beta, f(\beta), f^{(q)}(\beta), \ldots, f^{(q^{s-1})}(\beta)\right)_a
&= Q_0(\beta)_a \\
&\quad + Q_1\!\bigl(f(\beta)_a\bigr)_a
+ \cdots
+ Q_s\!\bigl(f^{(s-1)}(\beta)_a\bigr)_a .
\end{aligned}
\end{equation}

Motivated by the fact that each received symbol $y_{i,j}$ corresponds to an operator evaluation $f(\beta_{i,j})_{a_i}$, we define the \emph{$n$ generalized operator evaluation} of a polynomial $Q\in\Fqm[x,y_1,y_2,\cdots,y_s;\sigma]$ by
\[Q\left(\beta,\gamma_1,\gamma_2\cdots\gamma_s\right)_{a}=Q_0(\beta)_a+Q_1(\gamma_1)_a+\cdots+Q_s(\gamma_s)_a,\]
for $a, \beta,\gamma_1,\cdots,\gamma_s\in\Fqm$, 
    
The following lemma guarantees the existence of a nonzero interpolation polynomial. 
\begin{lemma}\label{lem:interpolation}
    Let $\boldsymbol{y}=(\boldsymbol{y}_1\mid\boldsymbol{y}_2\mid\cdots\mid\boldsymbol{y}_{\ell})\in\Fqmn$, where $\boldsymbol{y}_i=(y_{i,1},y_{i,2},\cdots,y_{i,n_{i}})\in\Fqm^{n_i}$, and let $(\balpha,\bbeta)$ be an evaluation pair. Then there exists a nonzero polynomial $Q\in\mathcal{P}$ such that
    \begin{equation}\label{eq:condition_Q}
        Q\left(\beta_{i,j},y_{i,j},y_{i,j}^q,y_{i,j}^{q^2},\cdots,y_{i,j}^{q^{s-1}}\right)_{a_i}=0
    \end{equation}
    for $i=\os \ell$ and $j=\os n_i$.
    Moreover, such a polynomial can be found using $O(n^2)$ operations over $\Fqm$.
\end{lemma}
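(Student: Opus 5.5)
The plan is to view condition~\eqref{eq:condition_Q} as a homogeneous linear system over $\Fqm$ in the unknown coefficients of $Q_0,Q_1,\ldots,Q_s$, and then show that there are more unknowns than equations. A polynomial $Q\in\mathcal{P}$ is determined by the coefficients of $Q_0$ (at most $D+k$ of them, since $\deg Q_0\le D+k-1$) and of each $Q_i$, $1\le i\le s$ (at most $D+1$ each). This gives
\[
N_{\mathrm{unk}} = (D+k) + s(D+1) = (s+1)(D+1) + k - 1
\]
unknowns in total.

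First we check that each interpolation condition is $\Fqm$-linear in these coefficients. Write $Q_i=\sum_r q_{i,r}x^r$. By Definition~\ref{def:generalized_operator_evaluation},
\[
Q\bigl(\beta_{i,j},y_{i,j},\ldots,y_{i,j}^{q^{s-1}}\bigr)_{a_i}=\sum_r q_{0,r}\mathcal{D}_{a_i}^r(\beta_{i,j})+\sum_{u=1}^{s}\sum_r q_{u,r}\mathcal{D}_{a_i}^r\bigl(y_{i,j}^{q^{u-1}}\bigr).
\]
Every quantity $\mathcal{D}_{a_i}^r(\cdot)=\sigma^r(\cdot)N_r(a_i)$ is a constant in $\Fqm$ that can be computed from the data. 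The coefficients $q_{u,r}$ appear only to the first power, multiplied by these constants, so each condition is one homogeneous linear equation over $\Fqm$. Hence we obtain $n=\sum_i n_i$ homogeneous linear equations in total.

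Next we compare counts. Since $D=\lfloor (n-k+1)/(s+1)\rfloor$, we have $(s+1)(D+1)>n-k+1$. Therefore
\[
N_{\mathrm{unk}}=(s+1)(D+1)+k-1>n.
\]
A homogeneous system with more unknowns than equations has a nonzero solution, and that solution gives a nonzero $Q\in\mathcal{P}$ satisfying~\eqref{eq:condition_Q}.

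For the complexity claim, the coefficient matrix has $n$ rows and about $n+1$ columns, because $N_{\mathrm{unk}}$ exceeds $n$ by at most about $s+1$. We fill it by computing the terms $\mathcal{D}_{a_i}^r(\cdot)$ recursively through $\mathcal{D}^{r+1}_a(b)=\sigma(\mathcal{D}^r_a(b))a$, and the Frobenius powers $y^{q^u}$ in the same way. We then find a kernel vector by Gaussian elimination, or by a structured skew-polynomial interpolation routine to match the stated $O(n^2)$ bound. The only step needing real care is this operation count, and in particular whether applying $\sigma$ is counted as a unit operation. The existence argument itself is a direct dimension count.
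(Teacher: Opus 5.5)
Your proof is correct and follows the paper's argument. Both observe that each condition is $\Fqm$-linear in the coefficients of $Q_0,\dots,Q_s$, so the count $(s+1)(D+1)+k-1>n$ forces a nonzero kernel vector. For the $O(n^2)$ bound, the paper also avoids plain Gaussian elimination (which would cost $O(n^3)$) and instead cites a Newton-type skew-polynomial interpolation algorithm, which is exactly the structured routine you defer to.
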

\begin{proof}
    For each pair $(i,j)$,~\eqref{eq:condition_Q} imposes an $\Fqm$-linear constraint on the coefficients of polynomials $Q_0,\cdots,Q_s$. The dimension of the vector space $\mathcal{P}$ is
    \[(D+k)+s(D+1)=(D+1)(s+1)+k-1>n,\]
    whereas the total number of interpolation constraints equals $n$. Hence, the resulting homogeneous linear system admits a nontrivial solution. 
    
    Such a polynomial $Q$ can be computed via Newton interpolation~\cite[Proposition 2.6]{martinez2022codes} with complexity $O(n^2)$ over $\Fqm$.
\end{proof}

\subsection{Finding the candidates}\label{sec:finding_the_candidates}
The following lemma shows that the skew polynomial $Q$ given by Lemma~\ref{lem:interpolation} yields an algebraic condition, which the output candidate of the list decoder must satisfy.
\begin{lemma}\label{lem:key-equation}
Let $f\in\Fqm[x;\sigma]$ be a skew polynomial of degree at most $k-1$. Suppose the codeword $\boldsymbol{c}=(\boldsymbol{c}_1\mid\cdots\mid\boldsymbol{c}_\ell)\in\Fqmn$ with $\boldsymbol{c}_i=(f(\beta_{i,1})_{a_i},\ldots,f(\beta_{i,n_i})_{a_i})$, is received as $\boldsymbol{y}=(\boldsymbol{y}_1\mid\cdots\mid\boldsymbol{y}_\ell)\in\Fqmn$ with $\boldsymbol{y}_i=(y_{i,1},y_{i,2},\cdots,y_{i,n_{i}})$ with at most $e$ sum-rank errors. If $e\le \frac{s}{s+1}(n-k)$, then
\begin{align}\label{eq:f}
Q_0(x)+\sum_{j=1}^s Q_j(x)\cdot f^{(q^{j-1})}(x)=0.
\end{align}
\end{lemma}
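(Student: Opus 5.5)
Set $P(x):=Q_0(x)+\sum_{j=1}^s Q_j(x)\cdot f^{(q^{j-1})}(x)\in\Fqm[x;\sigma]$. The plan is to show that $P$ has too many roots, in the sense of generalized operator evaluation, for its degree, and then conclude $P=0$ by Lemma~\ref{lem:root}.

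First I bound the degree. Each $f^{(q^{j-1})}$ has degree at most $k-1$, since Frobenius twisting only changes coefficients. Hence $\deg(Q_j\cdot f^{(q^{j-1})})\le D+k-1$, and $\deg Q_0\le D+k-1$. So $\deg P\le D+k-1$.

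Next I evaluate $P$. By the product rule (Lemma~\ref{lem:product_rule}) and the Frobenius claim, for every position $(i,j)$,
\[P(\beta_{i,j})_{a_i}=Q_0(\beta_{i,j})_{a_i}+\sum_{r=1}^s Q_r\bigl(c_{i,j}^{q^{r-1}}\bigr)_{a_i},\qquad c_{i,j}=f(\beta_{i,j})_{a_i}.\]
Here we use that $\beta_{i,j},a_i\in\Fq$, so that $f^{(q^{r-1})}(\beta_{i,j})_{a_i}=c_{i,j}^{q^{r-1}}$. Subtracting the interpolation condition~\eqref{eq:condition_Q} gives
\[P(\beta_{i,j})_{a_i}=\sum_{r=1}^s Q_r\bigl(c_{i,j}^{q^{r-1}}-y_{i,j}^{q^{r-1}}\bigr)_{a_i}=\sum_{r=1}^s Q_r\bigl((c_{i,j}-y_{i,j})^{q^{r-1}}\bigr)_{a_i}.\]

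The main step is to use the error structure. Write $e_{i,j}=y_{i,j}-c_{i,j}$ and $t_i=\rank_{\Fq}(e_{i,1},\dots,e_{i,n_i})$, so that $\sum_i t_i\le e$. In block $i$, the error vector $\boldsymbol{e}_i$ has rank $t_i$. Hence there is an invertible matrix $T_i\in\Fq^{n_i\times n_i}$ such that $\boldsymbol{e}_iT_i$ has its last $n_i-t_i$ entries zero.

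Set $\boldsymbol{\beta}'_i=\boldsymbol{\beta}_iT_i$. Its entries are still $\Fq$-linearly independent, and they lie in $\Fq$. Both $P(\cdot)_{a_i}$ and $\gamma\mapsto Q_r(\gamma^{q^{r-1}})_{a_i}$ are $\Fq$-linear, because Frobenius fixes $\Fq$. Therefore the identity above transfers to the transformed basis: $P(\beta'_{i,j})_{a_i}$ equals $\sum_r Q_r\bigl((e'_{i,j})^{q^{r-1}}\bigr)_{a_i}$, and this vanishes for the $n_i-t_i$ indices $j$ with $e'_{i,j}=0$.

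So $P$ vanishes at $\sum_i(n_i-t_i)\ge n-e$ points, which are $\Fq$-linearly independent within each block, with $a_1,\dots,a_\ell$ in distinct conjugacy classes. If $P\neq0$, the root bound (Lemma~\ref{lem:root}, in its root-counting form) forces $\deg P\ge n-e$.

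It remains to compare the degrees. From $e\le\frac{s}{s+1}(n-k)$ we get $n-e\ge\frac{n-k}{s+1}+k$. Also $D+k-1\le\frac{n-k+1}{s+1}+k-1<\frac{n-k}{s+1}+k$. Hence $\deg P\le D+k-1<n-e$, a contradiction, and so $P=0$.

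The step I expect to be delicate is the rank-to-roots conversion: applying the basis change inside each block and checking $\Fq$-linearity of the twisted evaluations. The final inequality may also need a careful floor or integrality argument, for which I would use $n-e\ge\lceil\cdot\rceil$.
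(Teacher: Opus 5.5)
Your argument is essentially the paper's proof. The paper applies Lemma~\ref{lem:rank} to the row spaces of $A_i$ and $B_i$ to obtain a subspace $U_i$, of dimension at least $n_i-e_i$, of combinations $\sum_j\tau_j\beta_{i,j}$ with $\sum_j\tau_j(c_{i,j}-y_{i,j})=0$. It then uses linearity to show $\varPhi$ vanishes on $U_i$ and concludes with Lemma~\ref{lem:root}. Your change of basis $T_i$ is just an explicit basis of that kernel. Subtracting the interpolation condition before taking combinations, rather than after, is immaterial. Your degree comparison matches the paper's and is already strict, so no further floor or integrality argument is needed.

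There is, however, a concrete slip in the field you use for ranks, the basis change and linearity. In the setting of this lemma, $\sigma(x)=x^h$ and $q=h^n$, so $\sigma$ fixes only $\F_h$, not $\Fq$. Consequently $b\mapsto P(b)_{a_i}$ is $\F_h$-linear but in general not $\Fq$-linear. Moreover, the evaluation points $\beta_{i,j}\in\Fq$ are $\F_h$-linearly independent; they cannot be $\Fq$-linearly independent once $n_i\ge 2$. As written, your transfer step (``$\Fq$-linear because Frobenius fixes $\Fq$'') and your independence claim for $\boldsymbol{\beta}'_i$ are therefore false.

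The field $\Fq$ should enter only where you use $\beta_{i,j},a_i\in\Fq$ to invoke the Frobenius claim $f^{(q^{r-1})}(\beta_{i,j})_{a_i}=c_{i,j}^{q^{r-1}}$. Everything else should be over $\F_h$:
\begin{enumerate}
\item Take $t_i$ to be the $\F_h$-rank of the error block, which is what the sum-rank distance measures here.
\item Take $T_i\in\F_h^{n_i\times n_i}$.
\item Use that $\gamma\mapsto\gamma^{q^{r-1}}$ fixes $\F_h$ and that $Q_r(\cdot)_{a_i}$ and $P(\cdot)_{a_i}$ are $\F_h$-linear.
\end{enumerate}
With these changes your computation goes through verbatim. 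The root bound then applies to the $\F_h$-independent points $\beta'_{i,j}$, read, as you correctly do, as ``$n-e$ independent roots force $\deg P\ge n-e$''.
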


\begin{proof}
    Define
    \[
    \varPhi(x)=Q(x,f(x),f^{(q)}(x),\ldots,f^{(q^{s-1})}(x)).
    \]
    
    As $h^t=q^m$, let $A_i,B_i\in\Fh^{n_i\times 2t}$ be defined as
    \[
    A_i=\begin{pmatrix}
    \beta_{i,1} & f(\beta_{i,1})_{a_i}\\
    \vdots & \vdots\\
    \beta_{i,n_i} & f(\beta_{i,n_i})_{a_i}
    \end{pmatrix},
    \quad
    B_i=\begin{pmatrix}
    \beta_{i,1} & y_{i,1}\\
    \vdots & \vdots\\
    \beta_{i,n_i} & y_{i,n_i}
    \end{pmatrix},
    \]
    for each $i=1,\ldots,\ell$, where we identify elements of $\F_{h^t}$ with vectors in $\F_h^t$ via a fixed basis.
    
    We denote by $\left \langle X\right \rangle$ the $\Fh$-spaces spanned by the rows of $X$. We will use the following lemma:
    \begin{lemma}[Lemma~12, \cite{guruswami2016explicit}]\label{lem:rank}
        Let $X,Y\in \Fh^{n\times t}$ with $\rank(X-Y)\leq e$, then $\dim_{\Fh}(\left \langle X\right \rangle\cap\left \langle Y\right \rangle )\geq\dim_{\Fh}(\left \langle X\right \rangle)-e$
    \end{lemma}
    Applying Lemma~\ref{lem:rank} to the matrices $A_i$ and $B_i$, we obtain that $\rank(A_i-B_i)=\rank(\boldsymbol{c}_i-\boldsymbol{y}_i)$. Suppose $\rank(\boldsymbol{c}_i-\boldsymbol{y}_i)=e_i$, then $\sum_{i=1}^{\ell}e_i=e$.
    
    By Lemma~\ref{lem:rank}, $\dim_{\Fh}(\left\langle A_i\right\rangle\cap\left\langle B_i\right\rangle)\geq\dim_{\Fh}(\left\langle A_i\right\rangle)-e_i=n_i-e_i$. This implies there exists an $\Fh$-subspace $U_i\subseteq \spa\{\beta_{i,1},\cdots,\beta_{i,n_i}\}$ of dimension at least $n_i-e_i$ such that, for $\beta_i=\sum_{j=1}^{n_i}c_j\beta_{i,j}\in U_i$ with $c_j\in \Fh$, one has $\sum_{j=1}^{n_i}c_jf(\beta_{i,j})_{a_i}=\sum_{j=1}^{n_i}c_jy_{i,j}$. Hence, 
    \begin{align*}
        0&=\sum_{j=1}^{n_i}c_jQ(\beta_{i,j},y_{i,j},y_{i,j}^q,\cdots,y_{i,j}^{q^{s-1}})_{a_i}\\
        &=Q\left(\beta_i,f(\beta_{i,j})_{a_i},\left(f(\beta)_{a_i}\right)^q,\left(f(\beta_{i,j})_{a_i}\right)^{q^2},\cdots,\left(f(\beta_{i,j})_{a_i}\right)^{q^{s-1}}\right)_{a_i}\\
        &=Q(\beta_i, f(\beta_i), f^{(q)}(\beta_i), f^{(q^2)}(\beta_i),\cdots, f^{(q^{s-1})}(\beta_i))_{a_i}\\
        &=\varPhi(\beta_i)_{a_i}
    \end{align*}
    This implies that for $i=1,\cdots,\ell$, there are at least $n_i-e_i$ $\Fh$-linear independent $\beta_i$ such that $\varPhi(\beta_i)_{a_i}=0$. Thus,
    \begin{align*}
        \sum^{\ell}_{i=1}(n_i-e_i)=n-e>D+k-1\geq \deg(\varPhi).
    \end{align*}
    According to Lemma~\ref{lem:root}, $\varPhi$ can only be a zero polynomial.
\end{proof}

So the candidate messages emerge from the solution space of $f$ that satisfies~(\ref{eq:f}). The following implies that the solutions form a periodic subspace.
\begin{lemma}\label{lem:solutions}
    The set of solutions $f(x)=f_0+f_1 x +\cdots+ f_{k-1} x^{k-1}\in\Fqm[x;\sigma]$ to the equation
    \begin{align}\label{eq:solution}
        Q_0(x)+Q_1(x)\cdot f(x)+Q_2(x)\cdot f^{(q)}(x)+\cdots+Q_s(x)\cdot f^{(q^{s-1})}(x)=0
    \end{align}
    (where at least one of $\{Q_0,Q_1,\cdots,Q_s\}$ is nonzero) is an $(s-1,m,k)$-periodic subspace.
\end{lemma}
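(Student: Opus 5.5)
The plan is to expand \eqref{eq:solution} coefficient by coefficient and obtain a triangular system in the unknowns $f_0,\dots,f_{k-1}$, viewed as vectors in $\Fq^m$. Each unknown $f_r$ should enter its own equation only through an additive map whose kernel has $\Fq$-dimension at most $s-1$. Everything else in that equation should depend only on the prefix $f_0,\dots,f_{r-1}$. This is exactly the data required by Definition~\ref{def:periodic_subspaces}, in the spirit of the canonical form of Definition~\ref{def:canonical}.

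First, I dispose of a degenerate case. If $Q_1=\cdots=Q_s=0$, then $Q_0\neq 0$ and \eqref{eq:solution} has no solutions; the empty set is trivially periodic. Otherwise, write $Q_j(x)=\sum_i Q_{j,i}x^i$ and let $d$ be the smallest index such that $Q_{j,d}\neq 0$ for some $j\in\{1,\dots,s\}$.

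Using the multiplication rule in $\Fqm[x;\sigma]$ with $\sigma(z)=z^h$, the coefficient of $x^{r+d}$ on the left of \eqref{eq:solution} is
\[
Q_{0,r+d}+\sum_{j=1}^{s}\sum_{i\ge d}Q_{j,i}\,\sigma^{i}\!\left(f_{r+d-i}^{\,q^{j-1}}\right).
\]
By minimality of $d$, the only terms involving $f_r$ come from $i=d$, and all other terms involve $f_{r'}$ with $r'<r$. Setting this coefficient to zero for $r=0,1,\dots,k-1$ gives
\[
B(f_r)=-Q_{0,r+d}-\sum_{j=1}^{s}\sum_{i>d}Q_{j,i}\,\sigma^{i}\!\left(f_{r+d-i}^{\,q^{j-1}}\right)=:c_r(f_0,\dots,f_{r-1}),
\]
where $B(z)=\sum_{j=1}^{s}Q_{j,d}\,\sigma^d\bigl(z^{q^{j-1}}\bigr)$. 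Since $\sigma$ commutes with $q$-th powers, we have $B=L\circ\sigma^d$ with $L(w)=\sum_{j=1}^s Q_{j,d}\,w^{q^{j-1}}$. Here $L$ is a nonzero $q$-linearized polynomial of $q$-degree at most $s-1$, so $\ker L$ is an $\Fq$-subspace of $\Fqm$ of dimension at most $s-1$. The map $\sigma^d$ is a field automorphism of $\Fqm$ that maps $\Fq$ onto itself. Hence $W:=\ker B=\sigma^{-d}(\ker L)$ is again an $\Fq$-subspace of dimension at most $s-1$. Because $B$ is additive, for every prefix $(f_0,\dots,f_{r-1})$ the set of admissible $f_r$ is either empty or a coset $W+v$, with $v$ depending only on the prefix through $c_r$. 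The remaining coefficient equations, those of $x^{r}$ for $r<d$ or $r\ge k+d$, only cut the solution set down further, so they preserve the containment demanded by Definition~\ref{def:periodic_subspaces}.

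The main obstacle is linearity: the twists $z\mapsto z^{h^i q^{j-1}}$ are additive and $\Fh$-linear but not $\Fq$-linear. So I must check carefully that what is needed is only additivity plus the $\Fq$-structure of the kernel $W$. The solution set is then the affine translate of the kernel of an additive ($\Fh$-linear) map, and hence an affine space. The fibre over each prefix lies in a coset of the fixed $\Fq$-space $W$ of dimension at most $s-1$. This is precisely what the counting in Corollary~\ref{coll:intersection} uses. The other point to verify is that the shift by $d$ keeps all indices $r+d-i$ within $[0,r-1]$ for $i>d$, which follows from the minimality of $d$.
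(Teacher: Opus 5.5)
Your proof is correct and follows essentially the same route as the paper: you compare coefficients to get a triangular system in which $f_r$ enters only through a fixed nonzero $q$-linearized map of $q$-degree at most $s-1$, while everything else depends only on the prefix $f_0,\dots,f_{r-1}$. Your shift to the coefficient of $x^{r+d}$, giving $\ker B=\sigma^{-d}(\ker L)$, is equivalent to the paper's left-division by $x^d$. Your observation that under $\sigma(z)=z^h$ the solution set is only $\F_h$-affine, with fibres in cosets of the $\F_q$-space $W$, is more precise than the paper's claim that $T$ is $\F_q$-linear, and it is exactly what Corollary~\ref{coll:intersection} uses.
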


\begin{proof}
Define the $\F_q$-linear map
\[
T:\F_{q^m}[x;\sigma]_{<k}\to \F_{q^m}[x;\sigma],\qquad
T(f):=Q_0+\sum_{j=1}^s Q_j\cdot f^{(q^{j-1})}.
\]
Then $f$ satisfy \eqref{eq:solution} is equivalent to $T(f)=0$. Since $T$ is $\F_q$-linear in $f$, the solution set $\{f:T(f)=0\}$ is an $\F_q$-affine subspace. We now proceed to analyze its structure.


    First, we can assume that at least one $Q^{\star}_i(x)$ has nonzero constant term, for some $i^{*}\in\{0,1,\cdots,s\}$, since we can left-divide by $x^d$ for some $d$. Moreover, if $Q_1(x),\cdots,Q_s(x)$ is all divisible by $x$, then so is $Q_0(x)$, thus we can take $i^*>0$.

    For $i=0,1,2,\cdots,s$, denote
    \[Q_{i}(x)=q_{i,0}+q_{i,1}x+q_{i,2}x^2+\cdots.\]
    For $j=0,1,2,\cdots,k-1$, define the linearized polynomial
    \begin{align}\label{eq:linearized polynomial}
        R_{j}(x)=q_{1,j}x+q_{2,j}x^q+q_{3,j}x^{q^2}+\cdots+q_{s,j}x^{q^{s-1}}.
    \end{align}
    Since $R_0$ is a nonzero $q$-linearized polynomial of $q$-degree at most $s-1$, its root set in $\F_{q^m}$ is an $\F_q$-vector subspace of dimension at most $s-1$, which we denote by $W$.
    
    Fix an $i\in\{0,1,\cdots,k-1\}$. Expanding the~\eqref{eq:solution} and equating the coefficient of $x^i$ to be $0$, we have
    \begin{align}\label{eq:xi}
        q_{0,i}+R_{i}\left(\sigma^i(f_0)\right)+R_{i-1}\left(\sigma^{i-1}(f_1)\right)+\cdots+R_1\left(\sigma(f_{i-1})\right)+R_0(f_i)=0.
    \end{align}
    This implies $f_i\in W+b_i$ for some $b_i\in\Fqm$ that is determined by $f_0,f_1,\cdots,f_{i-1}$. So for each choice of $f_0,f_1,\cdots,f_{i-1}$, $f_i$ must belong to a fixed coset of the subspace $W$ of dimension at most $s-1$. Thus by Definition~\ref{def:periodic_subspaces}, we conclude that the solution space is an $(s-1,m,k)$-periodic subspace. Also, it is clear from Definition~\ref{def:canonical} that a canonical representation of the periodic subspace can be computed in $\poly(k,m,\log q)$ time.
\end{proof}

\subsection{Code Construction}\label{sec:construction}

We now present an explicit construction of list-decodable subcodes of linearized Reed--Solomon (LRS) codes by combining the list-decoding framework developed in Sections~\ref{sec:Interpolation_step} and~\ref{sec:finding_the_candidates} with explicit subspace designs. Throughout, we work under the parameter setting of Section~\ref{sec:list_decoding}: the code alphabet is $\F_{h^t}$ with $t=nm$, the automorphism is $\sigma(x)=x^h$, and the evaluation pair $(\alpha,\beta)$ is chosen from the subfield $\F_{h^n}\subseteq \F_{h^t}$.

For every received word $\boldsymbol{y}$, we can compute, in polynomial time, an affine solution space of message polynomials satisfying the key algebraic constraint imposed by the decoder. Crucially, this solution space admits a strong structural description: it forms an $(s-1,m,k)$-periodic subspace in the coefficient domain (see Lemma~\ref{lem:solutions}). Consequently, in order to bound the final list size, it suffices to ensure that the message space intersects any such periodic subspace in only a small number of points.

To this end, we recall the notion of subspace designs (see Definition~\ref{def:subspace_design}). A collection of $\F_q$-subspaces $\{H_1,\dots,H_M\}$ of $\F_q^{mn}$ is an $(s,A,n)$ $\F_q$-subspace design if, for every $\F_{q^n}$-linear subspace $W\subseteq \F_{q^n}^m$ of $\F_{q^n}$-dimension $s$, we have
\[
\sum_{i=1}^M \dim_{\F_q}(H_i\cap W)\ \le\ A.
\]
Intuitively, this guarantees that any low-dimensional ``structured'' subspaces $W$ cannot have large cumulative intersections with many of the subspaces $H_i$'s.

The relevance of subspace designs to our decoding framework is formalized in Corollary~\ref{coll:intersection}: Specifically, 
if $\mathcal{P}\subseteq (\F_{q^n}^m)^k$ is an $(s-1,m,k)$-periodic affine subspace and $H_1,\dots,H_k$ are chosen from an $(s,A,n)$ subspace design, then
\[
\dim_{\F_q}\big(\mathcal{P}\cap(H_1\times\cdots\times H_k)\big)\ \le\ A.
\]
Therefore, restricting the message space to $H_1\times\cdots\times H_k$ ensures that the final candidate list produced by the decoder is small.

We use the explicit construction of subspace designs due to Guruswami and Kopparty, stated below in a form tailored to our parameter regime.

\begin{theorem}[Theorem~8, \cite{guruswami2016explicit}]\label{thm:ex_subspace_design}
Fix $\varepsilon\in(0,1)$ and integers $m,n,s$ with $s\le \varepsilon m/4$.
Let $q$ be a prime power such that $q^n>m$.
Then there exists an \emph{explicit} collection of $\F_q$-subspaces of $\F_q^{mn}$ of size $q^{\Omega(\varepsilon mn/s)}$ forming an $(s,\, 2(m-1)s/\varepsilon,\, n)$ $\F_q$-subspace design. Moreover, bases for these subspaces can be computed in time polynomial in the relevant parameters.
\end{theorem}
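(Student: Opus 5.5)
The statement is the Guruswami--Kopparty/Guruswami--Wang--Xing explicit subspace design construction, specialised to our parameters. The plan is to reproduce the standard algebraic construction via folded Reed--Solomon/Wronskian-type evasiveness, lifted to $\F_{q^n}$.

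\textbf{Setting up the construction.} Identify $\F_q^{mn}$ with $\F_{q^n}^m$, and further with the space of polynomials of degree less than $m$ over $\F_{q^n}$. Since $q^n>m$, we can pick an element $\gamma\in\F_{q^n}^*$ of multiplicative order at least $m$; we also need enough room for $\varepsilon mn/s$ further distinct points. For a parameter $r\approx \varepsilon m/(2s)$, index the subspaces by carefully chosen elements $\alpha\in\F_{q^n}$, or by irreducible polynomials over $\F_q$ of degree $r$. Following Guruswami--Kopparty, take
\[
H_\alpha=\{\,p\in\F_q[X]_{<m}\ (\text{viewed inside }\F_{q^n}[X]_{<m}) : p(\gamma^{j}\alpha)=0 \text{ for } j=0,\dots,r-1\,\},
\]
or the analogous version with $\F_q$-polynomials vanishing at $r$ conjugate points. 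Each $H_\alpha$ is an $\F_q$-subspace of codimension at most $rn$ inside $\F_q^{mn}$, i.e.\ of dimension at least $(1-\varepsilon/2)mn$. Counting admissible $\alpha$ then gives the claimed family size $q^{\Omega(\varepsilon mn/s)}$.

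\textbf{Bounding intersections.} Fix an $\F_{q^n}$-subspace $W\subseteq\F_{q^n}^m$ of dimension $s$ with basis $p_1,\dots,p_s$. Consider the folded Wronskian
\[
\Delta(X)=\det\bigl(p_i(\gamma^{j}X)\bigr)_{1\le i\le s,\;0\le j<s}.
\]
Because the $p_i$ are linearly independent and $\gamma$ has order at least $m$, the determinant $\Delta$ is nonzero; this is the Guruswami--Kopparty lemma. Its degree is at most $s(m-1)$. The key claim is that $\dim_{\F_q}(H_\alpha\cap W)=d_\alpha$ forces $\Delta$ to vanish at $\alpha$ (or at the associated points) with multiplicity roughly $d_\alpha(r-s+1)/n$, suitably normalised. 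Summing over $\alpha$ gives
\[
\sum_\alpha d_\alpha (r-s+1)\le s(m-1)\,n^{\text{(normalisation)}}.
\]
With $r\ge 2s/\varepsilon$ and the appropriate scaling, this yields $A=2(m-1)s/\varepsilon$.

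\textbf{Main obstacle.} The hardest step is the multiplicity argument when $\F_q$-subspaces meet an $\F_{q^n}$-subspace. The $\F_q$-dimension $d_\alpha$ must be converted into $\F_{q^n}$-linear vanishing information. I would handle this by choosing an $\F_q$-basis of $H_\alpha\cap W$ and passing to a new basis of $W$ in which $\lceil d_\alpha/n\rceil$ elements have the prescribed zeros, then arguing via column operations that $\Delta$ inherits those zeros. This is where the condition $s\le\varepsilon m/4$ and the constant $2$ in $A$ must be tracked.

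\textbf{Explicitness.} The bases of each $H_\alpha$ are kernels of explicit linear maps, so they can be computed by linear algebra in polynomial time.
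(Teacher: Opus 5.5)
The paper gives no proof of this statement; it imports it from \cite{guruswami2016explicit}. Judged on its own, your sketch has a genuine gap. It sits exactly at the step you call the main obstacle, and it cannot be closed within your framework.

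With $H_\alpha$ as you literally define it (polynomials in $\F_q[X]_{<m}$), every $H_\alpha$ lies in the $\F_q$-rational part $\F_q^m$ of $\F_{q^n}^m$. So $\dim_{\F_q}H_\alpha\le m$, and its codimension in $\F_q^{mn}$ is at least $m(n-1)$, not $rn$. For such rational subspaces an $n$-free intersection bound is in fact easy, since $\F_q$-independent rational vectors stay $\F_{q^n}$-independent. But the subspaces are far too small to give the rate $1-O(\varepsilon)$ the paper needs later.

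If instead you take $H_\alpha=\{p\in\F_{q^n}[X]_{<m}: p(\gamma^j\alpha)=0\}$, i.e.\ Guruswami--Kopparty run over $\F_{q^n}$, then $H_\alpha$ is $\F_{q^n}$-linear. Hence $H_\alpha\cap W$ is an $\F_{q^n}$-space whose $\F_q$-dimension is $n$ times its $\F_{q^n}$-dimension. Your $\lceil d_\alpha/n\rceil$ device then gives only $\sum_\alpha\lceil d_\alpha/n\rceil(r-s+1)\le s(m-1)$, i.e.\ $\sum_\alpha d_\alpha\le n\,s(m-1)/(r-s+1)$, and no normalisation removes the factor $n$. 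In fact no family of $\F_{q^n}$-linear subspaces can satisfy the statement. Choose $W$ to contain a nonzero vector $v$ of one $H_\alpha$; then $\F_{q^n}v\subseteq H_\alpha\cap W$, so $\dim_{\F_q}(H_\alpha\cap W)\ge n$. This exceeds $2(m-1)s/\varepsilon$ as soon as $n>2(m-1)s/\varepsilon$, which is precisely the regime where the paper uses the theorem ($m=O(s/\varepsilon)$, $n$ the block length).

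So a proof of the statement as given needs two things your proposal does not supply: a family of $\F_q$-subspaces of codimension $O(\varepsilon mn)$ that are genuinely \emph{not} $\F_{q^n}$-linear, and an intersection argument that bounds $\F_q$-dimension directly rather than through $\F_{q^n}$-dimension.

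There are two further problems.
\begin{itemize}
\item The count does not follow. Indexing by $\alpha\in\F_{q^n}$ gives at most $q^n$ subspaces, and indexing by degree-$r$ irreducibles over $\F_q$ gives about $q^{r}/r\approx q^{\varepsilon m/(2s)}$. Neither is $q^{\Omega(\varepsilon mn/s)}$ in general; the exponent needs the factor $n$, e.g.\ from points of degree $\Theta(\varepsilon m/s)$ over $\F_{q^n}$.
\item The single parameter $r$ is inconsistent. It is set to $\approx\varepsilon m/(2s)$ in the construction but required to be $\ge 2s/\varepsilon$ in the final count, and these are incompatible in general. Under $s\le\varepsilon m/4$ alone, $r\approx\varepsilon m/(2s)$ shifts need not even make $r-s+1$ positive. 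In the folded-Wronskian analysis, the degree of the evaluation points and the number of $\gamma$-shifts must be separate parameters whose product is the codimension.
\end{itemize}
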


Let $\{H_1,\dots,H_k\}$ be $k$ distinct subspaces selected from the subspace design guaranteed by Theorem~\ref{thm:ex_subspace_design}.
We restrict the message polynomial
\[
f(x)=\sum_{i=0}^{k-1} f_i x^i
\]
by enforcing
\[
f_i\in H_{i+1}\qquad\text{for all }i=0,1,\dots,k-1.
\]
This restriction yields an explicit $\F_h$-linear subcode of the original LRS code. Since bases for the subspaces $H_i$ can be computed efficiently (see Theorem~\ref{thm:ex_subspace_design}), the encoding algorithm is efficient as well.

\begin{theorem}[Explicit list-decodable LRS subcodes]\label{thm:explicit-lrs-subcode}
For any $\eps\in(0,1)$ and integer $1\leq s\leq m$, there exists an explicit $\Fh$-linear subcode of the linearized Reed--Solomon code $\mathrm{FRS}[\balpha,\bbeta;\boldsymbol{n},k]\subseteq\F_{{h}^{t}}^{n}$ of rate at least $(1-2\eps)k/n$ such that:
\begin{enumerate}
  \item \textbf{Decoding radius:} The code is list-decodable in polynomial time from up to
  \[
  e\ \le\ \frac{s}{s+1}(n-k)
  \]
  sum-rank errors.
  \item \textbf{List size:} The output list is contained in an $\F_h$-subspace of dimension at most $O(s^2/\varepsilon^2)$, and hence has list size at most $h^{O(s^2/\varepsilon^2)}$.
\end{enumerate}
\end{theorem}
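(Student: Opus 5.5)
The plan is to combine the pieces already assembled: the interpolation of Lemma~\ref{lem:interpolation}, the key equation of Lemma~\ref{lem:key-equation}, the periodic structure of Lemma~\ref{lem:solutions}, and the intersection bound of Corollary~\ref{coll:intersection}, instantiated with the explicit subspace design of Theorem~\ref{thm:ex_subspace_design}.

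\textbf{Construction and rate.} Work over $q=h^n$, $\F_{q^m}=\F_{h^t}$. Apply Theorem~\ref{thm:ex_subspace_design} over the base field $\F_h$: each coefficient $f_i\in\F_{h^t}\cong\F_h^{mn}$. We invoke it with parameter $s$ and the given $\eps$, which yields an $(s,A,n)$ $\F_h$-subspace design with $A=2(m-1)s/\eps$. We may assume $s\le \eps m/4$; this can be arranged by taking $m\ge 4s/\eps$, which is our freedom in choosing $t=nm$. To get rate $(1-2\eps)k/n$, we need each $H_i$ to have $\F_h$-codimension at most $2\eps mn$. In the Guruswami--Kopparty/Guruswami--Wang construction, each subspace has dimension $(1-\eps)mn$ or so, so this should hold. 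We also need $k$ distinct subspaces, which follows since the collection size $h^{\Omega(\eps mn/s)}$ exceeds $k\le n$. The message space $H_1\times\cdots\times H_k$ is $\F_h$-linear, and because the evaluation map is $\F_h$-linear and injective, the subcode is $\F_h$-linear with $\F_h$-dimension at least $(1-2\eps)kmn$. This corresponds to rate $(1-2\eps)k/n$ relative to the alphabet $\F_{h^t}$.

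\textbf{Decoding.} Given $\boldsymbol{y}$ with $e\le\frac{s}{s+1}(n-k)$ errors, we first compute $Q$ via Lemma~\ref{lem:interpolation}. By Lemma~\ref{lem:key-equation}, every codeword in the ball satisfies~\eqref{eq:f}. By Lemma~\ref{lem:solutions}, the solution set is an $(s-1,m,k)$-periodic affine subspace, and its canonical representation (Definition~\ref{def:canonical}) is computable in polynomial time. We then intersect it with $H_1\times\cdots\times H_k$. Both objects are given by $\F_h$-linear equations over $\F_h^{kmn}$, so Gaussian elimination returns an affine $\F_h$-basis in time polynomial in $k,t,\log h$.

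\textbf{List size.} The main subtlety is matching Corollary~\ref{coll:intersection}, which requires the periodic subspace's associated $W$ to be an $\F_{q^n}$-linear, or at least $\F_q$-linear, subspace of dimension $\le s$ inside $\F_q^m$. Here $W=\ker R_0$ is an $\F_q$-subspace of $\F_{q^m}$ of $\F_q$-dimension $\le s-1$. Viewing $\F_q^m$ with $q=h^n$ as $\F_{h^n}^m$, this $W$ is exactly of the type handled by Definition~\ref{def:subspace_design} with base field $\F_h$. Padding $W$ to dimension $s$ only increases intersections. The corollary then gives an $\F_h$-dimension of at most $A=2(m-1)s/\eps$.

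To reach the claimed $O(s^2/\eps^2)$, we choose $m$ minimal subject to $s\le\eps m/4$, i.e.\ $m=\Theta(s/\eps)$, so that $A=O(s^2/\eps^2)$ and the list size is at most $h^{O(s^2/\eps^2)}$. I expect the main obstacle to be this parameter bookkeeping: one must confirm that the codimension of the explicit $H_i$ is $O(\eps)$-fraction simultaneously with $m=\Theta(s/\eps)$, and check $q^n>m$. The latter holds trivially here.
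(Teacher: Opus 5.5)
Your proposal is correct and follows the same route as the paper's proof. That route is: interpolation (Lemma~\ref{lem:interpolation}), the key equation (Lemma~\ref{lem:key-equation}), the $(s-1,m,k)$-periodic solution space with $W=\ker R_0$ regarded as an $\F_{h^n}$-subspace of $\F_{h^n}^m$, Corollary~\ref{coll:intersection} applied to the $(s,2(m-1)s/\eps,n)$ $\F_h$-design, and finally $m=\Theta(s/\eps)$. Your additional bookkeeping covers points the paper's proof leaves implicit: the condition $s\le \eps m/4$, distinctness of the $k$ subspaces, the rate via the codimension of the $H_i$ (a property of the original design that Theorem~\ref{thm:ex_subspace_design} does not restate), and the field-size condition, which over base field $\F_h$ reads $h^n>m$.
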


\begin{proof}
Fix $\varepsilon\in(0,1)$ and an integer $s$ with $1\le s\le m$, and define the subcode $\mathcal{C}$ as described above. Fix a received word $\boldsymbol{y}$.

\smallskip
\noindent\textbf{Step 1: Structure of the decoder output.}
We run the interpolation and candidate-finding procedures from Sections~\ref{sec:Interpolation_step} and ~\ref{sec:finding_the_candidates}. By Lemma~\ref{lem:interpolation}, there exists a nonzero interpolation polynomial satisfying the constraints~\eqref{eq:condition_Q}. By Lemma~\ref{lem:solutions}, the set of solutions to the corresponding key algebraic equation forms an $(s-1,m,k)$-periodic affine subspace in the coefficient domain, which can be computed in polynomial time. We denote this subspace by $\mathcal{P}(\boldsymbol{y})$.

\smallskip
\noindent\textbf{Step 2: Correctness within the decoding radius.}
Suppose the transmitted codeword differs from $\boldsymbol{y}$ in at most
$e \le \frac{s}{s+1}(n-k)$ sum-rank errors. Then by Lemma~\ref{lem:key-equation}, the transmitted message polynomial $f$ satisfies the key equation derived from the interpolation step. Consequently, its coefficient vector lies in the affine subspace $\mathcal{P}(\boldsymbol{y})$.

\smallskip
\noindent\textbf{Step 3: Intersection with the restricted message space.}
By construction of the subcode, valid messages correspond precisely to coefficient vectors in
\[
\mathcal{M}\ :=\ H_1\times\cdots\times H_k.
\]
Therefore, the final output list produced by the decoder is exactly $\mathcal{P}(\boldsymbol{y})\cap \mathcal{M}$.

By Corollary~\ref{coll:intersection}, since $\mathcal{P}(\boldsymbol{y})$ is $(s-1,m,k)$-periodic and $\{H_1,\dots,H_k\}$ are chosen from an $(s,A,n)$ subspace design with
\[
A=\frac{2(m-1)s}{\varepsilon},
\]
we have
\[
\dim_{\F_h}\big(\mathcal{P}(\boldsymbol{y})\cap \mathcal{M}\big)\ \le\ A.
\]
Thus, the output list is contained in an $\F_h$-subspace of dimension at most $2(m-1)s/\varepsilon$, implying a list size of at most $h^{O(ms/\varepsilon)}$.

Finally, by choosing $m=O(s/\varepsilon)$, we obtain an output dimension bounded by $O(s^2/\varepsilon^2)$, and hence a list size of at most $h^{O(s^2/\varepsilon^2)}$.
\end{proof}

\begin{remark}
    By choosing $s=O(1/\eps)$, we obtain a code of rate $R=(1-2\eps)k/n$ that is list-decodable from up to $(1-R-\eps)n$ sum-rank errors, with list size bounded by $h^{\poly(1/\eps)}$.
\end{remark}

\section{Explicit Construction of List-Decodable Folded Linearized Reed--Solomon Codes}\label{sec:FLRS}
    In this section, we establish explicit list-decodable folded linearized Reed--Solomon codes. Throughout this section, we work in the parameter regime $m=\Theta(n)$, under which the proposed list-decoding algorithm runs in polynomial time in the block length. We begin by recalling the definition of folded linearized Reed--Solomon (FLRS) codes.
    \begin{definition}[Folded Linearized Reed--Solomon Codes]
        Let $\gamma$ be a primitive element of $\F_{q^m}$, and let
        $
        \balpha=(a_1,\ldots,a_{\ell})\in\F_{q^m}^{\ell}
        $
        be such that the elements $a_1,\ldots,a_{\ell}$ belong to distinct conjugacy classes.
        Let $N:=\lambda n$ denote the (unfolded) number of evaluation points. A $\lambda$-folded linearized Reed--Solomon code of length $n:= \frac{N}{\lambda}$ and dimension $k\leq n$ is defined as follows:
        \[
        \mathrm{FLRS}[\balpha,\gamma;\lambda,n,k]
        :=\left\{\boldsymbol{C}(f)
        =\bigl(\boldsymbol{C}_1(f)\mid\cdots\mid\boldsymbol{C}_{\ell}(f)\bigr)
        \;\middle|\;
        f\in\F_{q^m}[x;\sigma]_{<k}
        \right\},
        \]
        where for each $i=1,\ldots,\ell$,
        \[
        \boldsymbol{C}_i(f):=
        \begin{pmatrix}
        f(1)_{a_i}            & f(\gamma^{\lambda})_{a_i}          & \cdots & f(\gamma^{(\eta-1)\lambda})_{a_i} \\
        f(\gamma)_{a_i}       & f(\gamma^{\lambda+1})_{a_i}        & \cdots & f(\gamma^{(\eta-1)\lambda+1})_{a_i} \\
        \vdots                & \vdots                        & \ddots & \vdots \\
        f(\gamma^{\lambda-1})_{a_i} & f(\gamma^{2\lambda-1})_{a_i}        & \cdots & f(\gamma^{\eta \lambda-1})_{a_i}
        \end{pmatrix}
        \in \F_{q^m}^{\,\lambda\times \eta},
        \]
        and $\eta := n/\ell$.
    \end{definition}
    We now consider the code $\mathrm{FLRS}[\balpha,\gamma;\lambda,n,k]$, which encodes a message polynomial
    $f\in\F_{q^m}[x;\sigma]_{<k}$ with $\sigma(x)=x^q$.
    Suppose that the transmitted codeword $\boldsymbol{C}(f)\in\F_{q^m}^{\lambda\times n}$ is received as
    \[
    \boldsymbol{y}
    =\bigl(\boldsymbol{y}_1\mid\boldsymbol{y}_2\mid\cdots\mid\boldsymbol{y}_{\ell}\bigr)
    \in\F_{q^m}^{\lambda\times n},
    \]
    with at most $e$ sum-rank errors, where for each $i=1,\ldots,\ell$,
    \[
    \boldsymbol{y}_i=
    \begin{pmatrix}
    y_{i,0}     & y_{i,\lambda}     & \cdots & y_{i,(\eta-1)\lambda} \\
    y_{i,1}     & y_{i,\lambda+1}   & \cdots & y_{i,(\eta-1)\lambda+1} \\
    \vdots      & \vdots      & \ddots & \vdots \\
    y_{i,\lambda-1}   & y_{i,2\lambda-1}  & \cdots & y_{i,\eta \lambda-1}
    \end{pmatrix}
    \in\F_{q^m}^{\lambda\times \eta}.
    \]

    \subsection{Interpolation}\label{sec:flrs-interpolation}
    We employ a Welch-Berlekamp style interpolation procedure, following the approach used for folded Reed--Solomon codes~\cite{guruswami2011linear}. We perform $(s+1)$-variate skew polynomial interpolation with respect to an interpolation parameter $s\in\N^*$ satisfying $s\leq \lambda$. 
    Let
    \[
    D=\left\lfloor\frac{n(\lambda-s+1)-k+1}{s+1}\right\rfloor,
    \]
    and let $\mathcal{L}\subseteq\F_{q^m}[x,y_1,\ldots,y_s;\sigma]$ denote the set of all skew polynomials of the form~\eqref{eq:Q}
    such that
    \[
    \deg(Q_0)\leq D+k-1
    \quad\text{and}\quad
    \deg(Q_i)\leq D,\qquad 1\leq i\leq s.
    \]
    
    We perform interpolation as follows.
    \begin{lemma}\label{lem:interpolation2}
    Consider the code $\mathrm{FLRS}[\balpha,\gamma;\lambda,n,k]$ and a received word
    \[
    \boldsymbol{y}
    =(\boldsymbol{y}_1\mid\boldsymbol{y}_2\mid\cdots\mid\boldsymbol{y}_{\ell})
    \in\F_{q^m}^{\lambda\times n}.
    \]
    There exists a nonzero polynomial $Q\in\mathcal{L}$ satisfying the interpolation conditions
    \begin{align}\label{eq:interpolation2}
    Q\bigl(\gamma^{i\lambda+j},
          y_{k,i\lambda+j},
          y_{k,i\lambda+j+1},
          \ldots,
          y_{k,i\lambda+j+s-1}
    \bigr)_{a_k}=0,
    \end{align}
    for all
    \[
    i=0,1,\ldots,\eta-1,\quad
    j=0,1,\ldots,\lambda-s,\quad
    k=1,\ldots,\ell.
    \]
    \end{lemma}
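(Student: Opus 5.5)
The plan follows the proof of Lemma~\ref{lem:interpolation} exactly: a dimension count for a homogeneous $\F_{q^m}$-linear system.

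First, each condition in \eqref{eq:interpolation2} is $\F_{q^m}$-linear in the coefficients of $Q_0,\dots,Q_s$. Indeed, by the definition of generalized operator evaluation,
\[
Q(\beta,\gamma_1,\dots,\gamma_s)_a=Q_0(\beta)_a+\sum_{r=1}^s Q_r(\gamma_r)_a ,
\]
and each term $Q_r(\gamma)_a=\sum_u q_{r,u}\mathcal{D}_a^u(\gamma)$ is linear in the coefficients $q_{r,u}$. The quantities $\mathcal{D}_a^u(\gamma)$ are fixed elements of $\F_{q^m}$ determined by the received word and the code parameters. So every triple $(i,j,k)$ contributes one homogeneous linear equation.

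Next, I count equations against unknowns. The index ranges are $i\in\{0,\dots,\eta-1\}$, $j\in\{0,\dots,\lambda-s\}$ and $k\in\{1,\dots,\ell\}$. This gives $\eta\ell(\lambda-s+1)=n(\lambda-s+1)$ equations, using $\eta=n/\ell$. The space $\mathcal{L}$ has $\F_{q^m}$-dimension
\[
(D+k)+s(D+1)=(D+1)(s+1)+k-1 .
\]
Since $D=\lfloor (n(\lambda-s+1)-k+1)/(s+1)\rfloor$, we have $D+1>(n(\lambda-s+1)-k+1)/(s+1)$. Hence $(D+1)(s+1)>n(\lambda-s+1)-k+1$, so $\dim\mathcal{L}>n(\lambda-s+1)$. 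The number of unknowns strictly exceeds the number of equations, and a nonzero solution $Q\in\mathcal{L}$ exists.

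The only step needing care is the index bookkeeping. For every $j\le\lambda-s$, the arguments $y_{k,i\lambda+j},\dots,y_{k,i\lambda+j+s-1}$ stay inside the $i$-th column of $\boldsymbol{y}_k$, because $j+s-1\le\lambda-1$. This is exactly why there are $\lambda-s+1$ windows per column, and why this count matches the numerator in the definition of $D$.

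Computationally, $Q$ can be found by Gaussian elimination in polynomial time. Alternatively, it can be found by the skew Newton/Kötter-type interpolation cited in Lemma~\ref{lem:interpolation}.
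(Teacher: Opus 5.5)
Your proposal is correct and matches the paper's proof: it counts $n(\lambda-s+1)$ homogeneous $\F_{q^m}$-linear constraints against $(D+1)(s+1)+k-1$ unknown coefficients, and the choice of $D$ makes the latter strictly larger. Your explicit checks of linearity, of the floor inequality, and of the window indexing fill in details that the paper only asserts.
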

    \begin{proof}
    Each condition in~\eqref{eq:interpolation2} yields a linear equation over $\F_{q^m}$ in the coefficients of $Q$.
    The total number of interpolation constraints is
    \[
    \eta (\lambda-s+1)\ell = n(\lambda-s+1).
    \]
    On the other hand, the total number of unknown coefficients is
    \[
    (D+1)s + D + k
    = (D+1)(s+1)+k-1,
    \]
    which is strictly larger than $n(\lambda-s+1)$ by the choice of $D$.
    Hence, there exists a nonzero polynomial $Q\in\mathcal{L}$ satisfying all interpolation conditions.
    \end{proof}
    \begin{remark}
         The interpolation problem~\eqref{eq:interpolation2} can be solved via skew K\"otter interpolation from~\cite{liu2014kotter,bartz2014efficient} in $O(sN^2)$ operations in $\Fqm$. Alternatively, fast interpolation algorithms~\cite{bartz2024fast,bartz2022fast} achieve a complexity of
         $\widetilde{O}(s^{\omega}\mathcal{M}(n))$ operations in $\Fqm$, where $\mathcal{M}(n)=O(n^{1.635})$ is the cost of skew-polynomial multiplication and $\omega< 2.37286$ is the matrix multiplication exponent~\cite{bronstein1993international}.
    \end{remark}
    
    Given a polynomial $Q$ satisfying the interpolation conditions~\eqref{eq:interpolation2}, the following lemma shows that any codeword within the decoding radius must satisfy a certain algebraic condition.
    \begin{lemma}[Theorem~3, \cite{hormann2024interpolation}]\label{lem:interpolation3}
    Let $Q$ be a nonzero polynomial obtained from Lemma~\ref{lem:interpolation2}.
    If the number of sum-rank errors $e$ satisfies
    \[
    e\leq
    \frac{s}{s+1}\cdot
    \frac{n(\lambda-s+1)-k+1}{\lambda-s+1},
    \]
    then the transmitted message polynomial $f\in\F_{q^m}[x;\sigma]_{<k}$ satisfies
    \begin{align}\label{eq:f2}
    Q_0(x)
    +Q_1(x)\cdot f(x)
    +Q_2(x)\cdot f(x)\cdot\gamma
    +\cdots
    +Q_s(x)\cdot f(x)\cdot\gamma^{s-1}
    =0.
    \end{align}
    \end{lemma}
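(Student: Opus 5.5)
We prove Lemma~\ref{lem:interpolation3} the same way as Lemma~\ref{lem:key-equation}: show that the skew polynomial
\[
\varPhi(x):=Q_0(x)+Q_1(x)\cdot f(x)+Q_2(x)\cdot f(x)\cdot\gamma+\cdots+Q_s(x)\cdot f(x)\cdot\gamma^{s-1}
\]
has more $\F_q$-independent roots, with respect to the $\sigma$-distinct classes $a_1,\dots,a_\ell$, than its degree. Lemma~\ref{lem:root} then forces $\varPhi=0$. Its degree is at most $D+k-1$, since $\deg Q_0\le D+k-1$ and $\deg(Q_j f\gamma^{j-1})\le D+k-1$.

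\textbf{Step 1 (folding turns evaluation shifts into right multiplication by $\gamma$).} We check, by the product rule (Lemma~\ref{lem:product_rule}), that for any $b$,
\[
(f\cdot\gamma)(b)_a=f(\gamma b)_a,
\]
because the constant skew polynomial $\gamma$ evaluates as $\gamma(b)_a=\gamma b$. Iterating gives $(f\cdot\gamma^{r})(b)_a=f(\gamma^{r}b)_a$. Hence, at a correct position,
\[
Q(\gamma^{i\lambda+j},\,y_{k,i\lambda+j},\dots,y_{k,i\lambda+j+s-1})_{a_k}
\]
coincides with $\varPhi(\gamma^{i\lambda+j})_{a_k}$. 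Since every term is $\F_q$-linear in the evaluation point, the same identity holds for $\F_q$-linear combinations of points with the same shift pattern.

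\textbf{Step 2 (counting roots despite errors).} Fix a block $k$. For each window offset $j\in\{0,\dots,\lambda-s\}$, form the $\eta\times 2$ (over $\F_q$, $\eta\times(m+sm)$) matrices. Their rows are, for $i=0,\dots,\eta-1$, the point $\gamma^{i\lambda+j}$ followed by the $s$ consecutive codeword entries, respectively received entries. Their difference has rank at most $e_k=\mathrm{rank}(\boldsymbol{C}_k(f)-\boldsymbol{y}_k)$, since the window rows are sub-rows of the folded columns. The row spaces should be read column-wise over the $\eta$ folded columns, which is exactly what folding preserves.

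By Lemma~\ref{lem:rank}, there is a subspace $U_{k,j}\subseteq\mathrm{span}_{\F_q}\{\gamma^{i\lambda+j}\}_i$ of dimension at least $\eta-e_k$ on which the received and true windows agree. On it, linearity gives $\varPhi(\beta)_{a_k}=0$.

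Taking all $j$ together, the points $\gamma^{i\lambda+j}$ for $0\le i<\eta$ and $0\le j\le\lambda-s$ are distinct powers of a primitive element. We need them to be $\F_q$-independent, which requires $\eta\lambda\le m$; we assume this, consistent with the regime $m=\Theta(n)$. This yields at least
\[
(\lambda-s+1)(\eta-e_k)
\]
independent roots in class $a_k$.

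\textbf{Step 3 (conclusion).} Summing over $k$ gives at least $(\lambda-s+1)(n-e)$ roots. The assumed bound on $e$ gives
\[
(\lambda-s+1)(n-e)\ \ge\ n(\lambda-s+1)-\tfrac{s}{s+1}\bigl(n(\lambda-s+1)-k+1\bigr)\ =\ \tfrac{n(\lambda-s+1)+s(k-1)}{s+1}+\dots
\]
This should exceed $D+k-1$, since by the choice of $D$ we have $(D+1)(s+1)+k-1>n(\lambda-s+1)$ and $D\le (n(\lambda-s+1)-k+1)/(s+1)$. Lemma~\ref{lem:root} then gives $\varPhi=0$.

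\textbf{Main obstacle.} The delicate point is Step 2. One must make sure that a rank-$e_k$ error on the folded $\lambda\times\eta$ matrix costs at most $e_k$ root dimensions per window offset $j$, and not $e_k$ times something larger. This needs the error rank to be measured on the column space spanned over the $\eta$ folded columns, so we would set up the matrices column-indexed by $i$ for each fixed $j$. The strict inequality in the final count is only barely satisfied, so the floor in $D$ must be tracked exactly. If it falls short by one, we would use the integrality of $e$ and $D$ to close the gap.
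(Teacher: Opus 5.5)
The paper gives no proof of this lemma; it imports it from Theorem~3 of \cite{hormann2024interpolation}. Your Steps~1--2 follow the standard argument, the same as in Lemma~\ref{lem:key-equation}, and they are sound. The identity $(f\cdot\gamma^{r})(b)_a=f(\gamma^{r}b)_a$ holds. The windowed difference matrices have rank at most $e_k$, because deleting coordinates cannot increase rank. The offsets $j$ contribute a direct sum of root spaces. Your added assumption $\lambda\eta\le m$ is genuinely needed, since it is what makes the unfolded points in each block $\F_q$-independent, so stating it explicitly is right.

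The gap is Step~3, and your proposed patch cannot close it. Put $M:=n(\lambda-s+1)-k+1$.
\begin{itemize}
\item The hypothesis gives only $(\lambda-s+1)(n-e)\ge \frac{n(\lambda-s+1)+s(k-1)}{s+1}$.
\item The floor in $D$ gives $D+k-1\le \frac{n(\lambda-s+1)+s(k-1)}{s+1}$.
\item The inequality $(D+1)(s+1)+k-1>n(\lambda-s+1)$ that you invoke is a lower bound on $D$, so it plays no role here.
\end{itemize}
Both bounds are equalities when $(s+1)\mid M$ and $e=\frac{sM}{(s+1)(\lambda-s+1)}$ is an integer. In that case you have exactly $D+k-1$ independent roots, which does not force $\varPhi=0$. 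Integrality of $e$ and $D$ gives nothing more there.

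In fact the lemma as written, with $\le$, is false in that case. Take $\lambda=s=1$ and $n-k+1=2D$ even, so $e=D=d/2$ is allowed. Split a minimum-weight difference $\boldsymbol{C}(f')-\boldsymbol{C}(f)$ into two summands of sum-rank weight $d/2$ each. This gives a word $\boldsymbol{y}$ at distance $d/2$ from both $\boldsymbol{C}(f)$ and $\boldsymbol{C}(f')$. The lemma would then give $Q_0+Q_1f=Q_0+Q_1f'=0$, hence $Q_1(f-f')=0$. Since the skew polynomial ring has no zero divisors, $Q_1=0$ and then $Q_0=0$, contradicting $Q\neq 0$.

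The fix is to make the hypothesis strict: $e<\frac{s}{s+1}\cdot\frac{n(\lambda-s+1)-k+1}{\lambda-s+1}$. Then $(\lambda-s+1)(n-e)>\frac{n(\lambda-s+1)+s(k-1)}{s+1}\ge D+k-1\ge\deg\varPhi$. Lemma~\ref{lem:root} then completes your argument with no case analysis.
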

\subsection{Finding Candidate Solutions}\label{sec:flrs-finding-solutions}
We now determine the solution space of~\eqref{eq:f2}. The following lemma shows that the set of solutions forms an affine subspace of bounded dimension.

\begin{lemma}\label{lem:solutions2}
The set of solutions
\[
f(x)=f_0+f_1x+\cdots+f_{k-1}x^{k-1}\in\F_{q^m}[x;\sigma]
\]
satisfying
\begin{align}\label{eq:solution2}
Q_0(x)+Q_1(x)\cdot f(x)
+Q_2(x)\cdot f(x)\cdot \gamma
+\cdots
+Q_s(x)\cdot f(x)\cdot \gamma^{s-1}=0,
\end{align}
where at least one of $\{Q_0,Q_1,\ldots,Q_s\}$ is nonzero,
forms an affine $\Fqm$-subspace of dimension at most $O(s-1)$. Moreover, one can compute this using $O(N^2)$ field operations over $\Fqm$. 
\end{lemma}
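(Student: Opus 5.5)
Following the approach of Lemma~\ref{lem:solutions}, I would expand \eqref{eq:solution2} coefficient by coefficient and show that each coefficient $f_i$ is determined by $f_0,\dots,f_{i-1}$ up to a small kernel.

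\textbf{Expansion.} Since right multiplication by the constant $\gamma^{j}$ gives $f(x)\gamma^j=\sum_i f_i\sigma^i(\gamma^j)x^i$, the expression $Q_{j+1}(x)\cdot f(x)\cdot\gamma^{j}$ is $\F_{q^m}$-linear in the vector $(f_0,\dots,f_{k-1})$. This holds because left multiplication of $f$ by coefficients of $Q_{j+1}$ involves $\sigma^r(f_i)$. More carefully, the map $f\mapsto Q f\gamma^j$ is only $\F_q$-semilinear in general. I will therefore treat the unknowns as the $k$ coefficients and track how they appear.

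\textbf{Normalization.} As in Lemma~\ref{lem:solutions}, I first left-divide by a power of $x$ so that some $Q_{i^*}$ with $i^*\ge1$ has nonzero constant term $q_{i^*,0}$. Writing $Q_j=\sum_r q_{j,r}x^r$ and comparing coefficients of $x^i$ gives
\[
q_{0,i}+\sum_{r=0}^{i}\sum_{j=1}^{s} q_{j,r}\,\sigma^r\!\bigl(f_{i-r}\,\sigma^{i-r}(\gamma^{j-1})\bigr)=0 .
\]
The term with $r=0$ is $f_i\cdot\bigl(\sum_{j=1}^s q_{j,0}\,\gamma^{(j-1)q^{i}}\bigr)=f_i\,P(\gamma^{q^i})$, where $P(z)=\sum_{j=1}^s q_{j,0}z^{j-1}$ is a nonzero ordinary polynomial of degree at most $s-1$. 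The remaining terms depend only on $f_0,\dots,f_{i-1}$.

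\textbf{Counting free coordinates.} Whenever $P(\gamma^{q^i})\neq0$, the coefficient $f_i$ is uniquely determined by its predecessors. The elements $\gamma^{q^i}$ for $0\le i<m$ are the distinct conjugates of the primitive element $\gamma$. Since $P$ has at most $s-1$ roots, at most $s-1$ indices $i\in\{0,\dots,m-1\}$ are ``free''. For $k>m$ the conjugates repeat with period $m$, so the number of bad indices is at most $\lceil k/m\rceil(s-1)$, which matches the $d$ used in the theorem. At a bad index the equation places no constraint on $f_i$, and $f_i$ ranges over all of $\F_{q^m}$. Consequently the solution set is parametrized by the free coordinates.

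\textbf{Affine structure.} The solution set is an affine space: it is the image of $\F_{q^m}^{d}$ under a map that is affine over $\F_q$ and built triangularly. To call it an $\F_{q^m}$-affine subspace I must check that the dependence of later coefficients on earlier ones is $\F_{q^m}$-linear. This fails because of the $\sigma^r$ twists. To handle it, I would either view $\sigma^{r}(f_{i-r})$ through a change of variables, or, following \cite{hormann2024interpolation}, note that solving the triangular system expresses each determined $f_i$ as a function of the free ones whose image is contained in an $\F_{q^m}$-affine space of dimension at most $d$. This semilinearity issue is the main obstacle. The fallback is to state the dimension bound over $\F_q$ as $md$, which suffices for the subspace-evasive argument.

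\textbf{Complexity.} The cost bound follows from back-substitution: each of the $k\le N$ coefficients requires $O(N)$ operations, giving $O(N^2)$ field operations over $\F_{q^m}$ in total.
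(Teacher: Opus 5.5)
Your coefficient expansion and bad-index count are the same as the paper's argument. Your $P$ is the paper's $R_0$, your formula for the coefficient of $x^i$ is correct (and more accurate than the paper's display \eqref{eq:linear_form}), and you obtain the same $d=\lceil k/m\rceil(s-1)$. You also correctly identify the crux, and the paper's proof does not resolve it either: it simply asserts that \eqref{eq:solution2} is $\F_{q^m}$-linear in $(f_0,\dots,f_{k-1})$, which is false. For example, take $s=k=2$, $m\ge 2$, $Q_0=0$, $Q_1=-\gamma-c\gamma^{q^2}x$ and $Q_2=1+cx$ with $c\neq 0$. Then
\[ Q_1f+Q_2f\gamma=\bigl((\gamma^q-\gamma)f_1+c(\gamma^q-\gamma^{q^2})f_0^q\bigr)x, \]
so the solution set is $\{(f_0,\beta f_0^q):f_0\in\F_{q^m}\}$ with $\beta\neq 0$. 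This set is not an $\F_{q^m}$-subspace, and its $\F_{q^m}$-span is all of $\F_{q^m}^2$, even though $d=1$. Consequently, your second option (that the solutions lie in an $\F_{q^m}$-affine space of dimension at most $d$ in the coordinates $f_i$) is false. Your fallback, $\F_q$-dimension $md$, is strictly weaker than the lemma. It also does not rescue the explicit list bound: evasiveness against $\F_q$-subspaces of dimension $md$ costs $(md/\eps)^{md}$ in Theorem~\ref{thm:explicit_evasive_set}, which is superpolynomial when $m=\Theta(n)$.

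The missing idea is to make your first option precise through right-linearity. For $c\in\F_{q^m}$ we have $Q_i(fc)\gamma^{i-1}=(Q_if\gamma^{i-1})c$, because constants commute with $\gamma^{i-1}$. Hence $f\mapsto\sum_{i\ge 1}Q_if\gamma^{i-1}$ is linear for the right $\F_{q^m}$-structure on $\F_{q^m}[x;\sigma]_{<k}$. The coordinates for that structure are $g_l=\sigma^{-l}(f_l)$, where $f=\sum_l x^l g_l$. Concretely, substituting $f_l=\sigma^l(g_l)$ turns each $\sigma^r(f_{i-r})$ in your $i$-th equation into $\sigma^i(g_{i-r})$. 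Applying $\sigma^{-i}$ to that equation then makes it $\F_{q^m}$-affine in $g$, with coefficient $\sigma^{-i}(P(\gamma^{q^i}))$ on $g_i$. Your triangular count now goes through verbatim. It shows that the solution set is an $\F_{q^m}$-affine subspace of dimension at most $d$ in the coordinates $(g_0,\dots,g_{k-1})$, computable by the same back-substitution; in the example above it is the line $g_1=\sigma^{-1}(\beta)g_0$. This is the correct form of the lemma. Accordingly, the subspace-evasive set in Theorem~\ref{thm:flrs-subcode-explicit} must be imposed on $(g_0,\dots,g_{k-1})$, not on $(f_0,\dots,f_{k-1})$.
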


\begin{proof}
    Similar to Lemma~\ref{lem:solutions}, since~\eqref{eq:solution2} is linear in the coefficients of $f$ over $\Fqm$, the set of solutions forms an affine $\Fqm$-subspace. We now bound its dimension. 
    
    Without loss of generality, we may assume that there exists $i^\ast\in\{1,\ldots,s\}$ such that skew polynomial $Q_{i^\ast}(x)$ has a nonzero constant term. 

    For $i=0,1,2,\cdots,s$, denote
    \[Q_{i}(x)=q_{i,0}+q_{i,1}x+q_{i,2}x^2+\cdots.\]
    For $j=0,1,\ldots,k-1$, define the polynomial
    \begin{align}\label{eq:R_polynomial2}
        R_j(x)=q_{1,j}+q_{2,j}x+q_{3,j}x^2+\cdots+q_{s,j}x^{s-1}.
    \end{align}
    
    Since $q_{i^*,0}\neq0$, it follows that $R_0\neq0$. By condition, for $r\in\{0,1,\cdots,k-1\}$, the coefficient of $x^r$ of~\eqref{eq:solution2} is 0. The constant term equals to 
    \[q_{0,0}+q_{1,0}f_0+q_{2,0}f_0\gamma+\cdots+q_{s,0}f_0\gamma^{s-1}=q_{0,0}+R_0(\gamma)f_0.\]
    If $R_0(\gamma)\neq0$, then $f_0$ uniquely determined as $-q_{0,0}/R_0(\gamma)$. If $R_0(\gamma)=0$, then $q_{0,0}=0$ or else there will be no solutions to~\eqref{eq:solution2} and in that case $f_0$ may take an arbitrary value in $\Fqm$.

    Next, equating the coefficient of $x^r$ in~\eqref{eq:solution2} for $0\leq r<k$, we obtain
    
    \begin{equation}\label{eq:linear_form}
        \sum_{j=0}^{r} \sum_{i=0}^{s}q_{i,j} \, \sigma^{j}(f_{\,r-j}) \, \sigma^{r}(\gamma^{\,i})=R_0(\sigma^r(\gamma)) f_r+\sum_{j=1}^{r} R_j(\sigma^{r-j}(\gamma)) \sigma^j(f_{r-j})+q_{0,r}=0.
    \end{equation}
    If $R_0(\sigma^r(\gamma))\neq 0$, then $f_r$ is uniquely determined by $f_0,\ldots,f_{r-1}$. Otherwise, $f_r$ may take an arbitrary value in $\F_{q^m}$.

    Thus, the dimension of the solution space is at most the number of indices \[r\in\{0,1,\cdots,k-1\} \quad s.t.\quad R_0(\sigma^r(\gamma))=0.\]Since $\sigma^r(\gamma)=\gamma^{q^r}$ and $\gamma^{q^m}=\gamma$, the sequence $\gamma^{q^r}$ repeat with period $m$. Moreover, since $\deg(R_0)=s-1<m$, among the $m$ distinct elements $\gamma,\gamma^q,\ldots,\gamma^{q^{m-1}}$, the equation $R_0(\gamma^{q^r})=0$ can hold for at most $s-1$ values of $r$. Consequently, among the indices $r=0,\ldots,k-1$, there are at most \[\lceil k/m\rceil(s-1)\] values for which $R_0(\sigma^r(\gamma))=0$. Hence, the solution space is an affine $\F_{q^m}$-subspace of dimension at most $d:=\lceil k/m\rceil(s-1)$. Since $m=\Theta(n)$, we have $\lceil k/m\rceil(s-1)=O(s-1)$.

    Finally, the claimed $O(N^2)$ complexity follows from the fact that the linear system defined by~\eqref{eq:linear_form} has a simple lower-triangular structure.
\end{proof}
     Combining Lemmas~\ref{lem:interpolation2} and~\ref{lem:solutions}, we can conclude the following.
    \begin{theorem}\label{thm:LFRS_code}
        Suppose the $\mathrm{FLRS}[\balpha,\gamma;\lambda,n,k]$ of block length $n=N/\lambda$, rate $R=k/N$, encodes polynomials from $f\in \F_{q^m}[x;\sigma]$ with $\sigma(x)=x^q$. Let integer $1\leq s\leq m$. Given a received word $\boldsymbol{y}\in\Fqm^{\lambda\times n}$, then one can find, in time $O(N^2m^2)$ over $\Fq$, a basis of an affine $\F_{q^m}$-subspace of dimension $O(s-1)$ that contains all message polynomials $f\in\F_{q^m}[x;\sigma]_{<k}$ whose codeword differs from $\boldsymbol{y}$ in at most errors of
        \[\frac{s}{s+1}\cdot\frac{n(\lambda-s+1)-k+1}{\lambda-s+1}\]
     \end{theorem}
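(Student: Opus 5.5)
The argument chains three steps: compute an interpolation polynomial by Lemma~\ref{lem:interpolation2}, transfer the decoding condition to the message via Lemma~\ref{lem:interpolation3}, and extract the solution space as an affine subspace by Lemma~\ref{lem:solutions2}. We then track the cost of each step over $\F_q$.

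\emph{Interpolation.} Given $\boldsymbol{y}$, we set up the homogeneous linear system~\eqref{eq:interpolation2}. It has $n(\lambda-s+1)$ equations over $\F_{q^m}$. By the choice of $D$, the unknowns number $(D+1)(s+1)+k-1$, which exceeds the number of equations. Lemma~\ref{lem:interpolation2} therefore yields a nonzero $Q\in\mathcal{L}$. Following the remark after that lemma, we compute $Q$ by skew K\"otter interpolation in $O(sN^2)$ operations over $\F_{q^m}$.

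\emph{Transfer to the message.} Let $f\in\F_{q^m}[x;\sigma]_{<k}$ be any message whose codeword is within sum-rank distance
\[
e\le \frac{s}{s+1}\cdot\frac{n(\lambda-s+1)-k+1}{\lambda-s+1}
\]
of $\boldsymbol{y}$. Lemma~\ref{lem:interpolation3} states that $f$ satisfies~\eqref{eq:f2}. This needs $Q$ nonzero, which we have, so we may take the lemma as given.

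\emph{Solution space.} Lemma~\ref{lem:solutions2} then says that every such $f$ lies in the solution set of~\eqref{eq:solution2}. This set is an affine $\F_{q^m}$-subspace of dimension at most $\lceil k/m\rceil(s-1)$, which is $O(s-1)$ in the regime $m=\Theta(n)$. A basis is obtained as follows:
\begin{itemize}
\item[(a)] First normalize $Q$ by left-dividing by a common power of $x$, so that some $Q_{i^\ast}$ with $i^\ast\ge1$ has a nonzero constant term.
\item[(b)] Solve the lower-triangular system~\eqref{eq:linear_form} coefficient by coefficient.
\item[(c)] For each $r$, if $R_0(\gamma^{q^r})\neq0$, then $f_r$ is an affine function of $f_0,\dots,f_{r-1}$. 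Otherwise $f_r$ becomes a new free parameter, provided the equation is consistent; if it is inconsistent, the solution set is empty.
\end{itemize}
Propagating these affine expressions through the triangular system gives a basis (a particular solution together with the free directions) in $O(N^2)$ operations over $\F_{q^m}$.

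\emph{Cost.} Each operation over $\F_{q^m}$ costs $O(m^2)$ operations over $\F_q$ with schoolbook arithmetic. The total is therefore $O(N^2m^2)$ over $\F_q$, treating $s$ as absorbed or bounded by the folding parameter.

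\emph{Main obstacle.} The delicate point is the accounting in the candidate-solving step. One must make sure that the normalization by powers of $x$ does not change the solution set: left division is legitimate because~\eqref{eq:solution2} equals zero exactly when its left quotient does. One must also check that the number of indices $r<k$ with $R_0(\gamma^{q^r})=0$ is at most $\lceil k/m\rceil(s-1)$. This uses two facts: $R_0$ is a nonzero ordinary polynomial of degree at most $s-1$, and the conjugates $\gamma^{q^r}$ are distinct within each period of length $m$, because $\gamma$ is primitive and so has degree $m$ over $\F_q$.
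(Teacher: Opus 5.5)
Your route is the paper's route: the paper proves this theorem simply by chaining Lemma~\ref{lem:interpolation2}, Lemma~\ref{lem:interpolation3} and Lemma~\ref{lem:solutions2}, exactly as you do. However, your basis-extraction step (b)--(c) has a genuine gap. You inherit it from the paper's assertion, in the proof of Lemma~\ref{lem:solutions2}, that \eqref{eq:solution2} is $\F_{q^m}$-linear in the coefficients of $f$.

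Writing $f=\sum_l f_l x^l$ and expanding $Q_i\cdot f\cdot\gamma^{i-1}$, the $x^r$-coefficient of \eqref{eq:solution2} is
\[
q_{0,r}+\sum_{j=0}^{r}R_j\bigl(\gamma^{q^r}\bigr)\,\sigma^j(f_{r-j}).
\]
Note that the argument of $R_j$ is $\sigma^r(\gamma)$ for every $j$, not $\sigma^{r-j}(\gamma)$ as printed in \eqref{eq:linear_form}, so do not solve that system literally. Because of the twists $\sigma^j(f_{r-j})=f_{r-j}^{q^j}$, each $f_r$ is only an $\F_q$-affine function of $f_0,\dots,f_{r-1}$.

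For a concrete failure, take $k=2$ and suppose $R_0(\gamma)=0\neq R_0(\gamma^q)$. The two equations give $f_1=c_0+c\,f_0^{q}$, with constants $c_0,c$ determined by $Q$, and $c\neq0$ whenever $R_1(\gamma^q)\neq0$. Your ``particular solution plus free direction'' then spans $\{(\mu,c_0+c\mu):\mu\in\F_{q^m}\}$. This contains only the $q$ solutions $(u,c_0+cu^q)$ with $u\in\F_q$, out of $q^m$. So the space you output need not contain all valid messages, which is precisely what the theorem must guarantee.

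The repair is to use the right $\F_{q^m}$-action $f\mapsto f\cdot\mu$. The map $f\mapsto\sum_i Q_i\cdot f\cdot\gamma^{i-1}$ is linear for this action. Equivalently, work in the coordinates $g_l=\sigma^{-l}(f_l)$, i.e.\ $f=\sum_l x^l g_l$. The $r$-th equation then becomes
\[
\sigma^{-r}(q_{0,r})+\sum_{j=0}^{r}\sigma^{-r}\bigl(R_j(\gamma^{q^r})\bigr)\,g_{r-j}=0 .
\]
This is genuinely $\F_{q^m}$-linear and lower triangular in $(g_0,\dots,g_{k-1})$, and its $r$-th diagonal entry vanishes exactly when $R_0(\gamma^{q^r})=0$. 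Your normalization and root count then go through verbatim, and back-substitution yields an honest $\F_{q^m}$-basis of dimension at most $\lceil k/m\rceil(s-1)$ in these coordinates. In the original coordinates the set is only $\F_q$-affine, of $\F_q$-dimension at most $m\lceil k/m\rceil(s-1)$.

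Separately, your accounting proves $O(sN^2m^2)$ operations over $\F_q$, not $O(N^2m^2)$, since the K\"otter interpolation alone costs $O(sN^2)$ operations in $\F_{q^m}$. Bounding $s$ by $\lambda$ removes this factor only if $\lambda$ (hence $s$) is treated as a constant. Either state the bound with the factor $s$, or say explicitly that the hidden constant depends on $s$; the paper's statement is loose in the same way.

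Finally, the hypothesis the argument actually uses is $1\le s\le\lambda$, which is needed for Lemma~\ref{lem:interpolation2} and for a positive denominator in the radius. The stated $s\le m$ is not enough, and you switched to $s\le\lambda$ without saying so.
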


\subsection{Code Construction}\label{sec:flrs-construction}

We now present an explicit construction of list-decodable subcodes of folded linearized Reed–Solomon (FLRS) codes by combining the list-decoding framework developed in Sections~\ref{sec:flrs-interpolation} and~\ref{sec:flrs-finding-solutions} with explicit subspace-evasive sets. Throughout this subsection, we work in the parameter regime of Section~4 and assume $m=\Theta(n)$.

Fix a received word $\boldsymbol{y}\in\F_{q^m}^{\lambda\times n}$. The interpolation step described in Section~\ref{sec:flrs-interpolation}, together with the candidate-finding step from Section~\ref{sec:flrs-finding-solutions}, produces in polynomial time a nonzero interpolation polynomial $Q$ (see Lemma~\ref{lem:interpolation2}). Moreover, any transmitted message polynomial $f\in\F_{q^m}[x;\sigma]_{<k}$ whose codeword lies within the decoding radius must satisfy the key algebraic equation~\eqref{eq:f2}.

Crucially for controlling the list size, Theorem~\ref{thm:LFRS_code} shows that from $\boldsymbol{y}$ one can compute an affine $\F_{q^m}$-subspace
\[
\mathcal{P}(\boldsymbol{y})\ \subseteq\ \F_{q^m}^k
\]
of dimension $O(s-1)$ that contains the coefficient vectors $(f_0,\ldots,f_{k-1})$ of all such valid message polynomials $f$. Hence, in the FLRS setting, the list-decoding problem reduces to intersecting a bounded-dimensional affine subspace $\mathcal{P}(\boldsymbol{y})$ with a suitably restricted message set.

To ensure that this intersection is small for every low-dimensional affine subspace, we employ \emph{subspace-evasive sets}. Recall that a set $S\subseteq \F_q^k$ is called \emph{$(s,L)$-subspace-evasive} if for every affine subspace $H\subseteq\F_q^k$ of dimension at most $s$, one has $|S\cap H|\le L$ (see Definition~\ref{def:evasive-sets}). Therefore, once the decoder outputs an affine candidate space $\mathcal{P}(y)$ of dimension at most $d=\lceil k/m\rceil(s-1)$, restricting messages to an $(d,L)$-subspace-evasive set guarantees that the final list size is bounded by $L$.

We use the explicit construction of subspace-evasive sets due to Dvir and Lovett, stated below in a form tailored to our parameter regime.

\begin{theorem}[Explicit subspace-evasive sets, \cite{dvir2012subspace}]\label{thm:explicit_evasive_set}
For any $\varepsilon>0$, integer $s\ge 1$, and prime power $q$, there exists an explicit set $S\subseteq \F_q^k$ with
\[
|S|\ >\ q^{(1-\varepsilon)k}
\]
that is $(s,\,(s/\varepsilon)^s)$-subspace-evasive. Moreover, $S$ can be constructed in time polynomial in $\log q$ and $k$.
\end{theorem}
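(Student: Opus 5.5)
Since this is the result of Dvir and Lovett~\cite{dvir2012subspace}, the plan follows their algebraic construction: $S$ is a product of low-degree varieties, and evasiveness comes from B\'ezout's theorem. Set $r=\lceil s/\varepsilon\rceil$ and split the $k$ coordinates into $k/r$ consecutive blocks of size $r$, padding with free coordinates if $r\nmid k$. In each block, let $V\subseteq\F_q^r$ be the common zero set of $s$ polynomials $g_1,\dots,g_s$, each of degree at most $r$, and put $S=V^{k/r}$. Each block then loses only $s$ of its $r$ degrees of freedom, so $|S|\ge q^{(r-s)k/r}\ge q^{(1-\varepsilon)k}$ up to rounding. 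This needs $|V|\ge q^{r-s}$ exactly, not just heuristically. I would choose the $g_j$ triangular: $g_j$ involves one of the last $s$ variables linearly, with coefficient $1$, plus a high-degree expression in the first $r-s$ variables. Then every choice of the first $r-s$ coordinates extends uniquely, $V$ is the graph of a polynomial map, and membership is checkable in $\poly(k,\log q)$ time.

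The key step is the evasiveness bound. Let $H\subseteq\F_q^k$ be affine of dimension $s$ and write $H=\{v+Mz:z\in\F_q^s\}$. Put $M$ in column echelon form, so each of its $s$ pivot rows lies in one of at most $s$ blocks $B_1,\dots,B_u$ ($u\le s$). Projecting $H$ onto these blocks is then injective, so it suffices to bound the number of $z\in\F_q^s$ whose images satisfy the block equations in those $u$ blocks. Substituting $x=v+Mz$ gives a system of polynomial equations in $s$ variables $z$. I would pick $s$ of these equations, each of degree at most $r$. If their common zero set over $\overline{\F_q}$ is finite, B\'ezout's theorem bounds it by $r^s=(s/\varepsilon)^s$, which is the claimed $L$.

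The main obstacle is guaranteeing zero-dimensionality for every $H$, including degenerate ones where the restrictions of the $g_j$ become dependent. The approach I would try first is to look at the top-degree homogeneous parts. It suffices that, for every linear $s$-dimensional subspace of the relevant coordinates, the leading forms of the chosen $g_j$ (for example power sums $\sum_i x_i^{d_j}$ with distinct degrees $d_1<\dots<d_s\le r$, placed appropriately) have no common nonzero zero in projective space. Then there are no solutions at infinity, and affine B\'ezout gives finiteness together with the degree bound. To prove this no-common-zero property I would use a Vandermonde/Newton-identity argument: distinct power sums vanishing on a nonzero vector $x$ force relations among the elementary symmetric functions of the nonzero coordinates of $x$, and these relations are impossible once the degrees are large relative to $s$ and coprime to the characteristic. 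This also explains the condition $r\gtrsim s/\varepsilon$. If characteristic issues break the power-sum choice, I would fall back to Dvir--Lovett's own choice of polynomials, whose leading forms are designed to be a regular sequence on every $s$-dimensional subspace; B\'ezout plus the injective-projection reduction above then finishes the proof.
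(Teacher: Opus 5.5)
The paper does not prove this statement; it quotes it from Dvir--Lovett. So your outline has to stand on its own, and it breaks at exactly the step you flag as the main obstacle: finiteness of $V\cap H$. Your reduction to at most $s$ blocks via the pivot rows is fine. The criterion you then rely on, that the total-degree leading forms of the $g_j$ have no common nonzero zero on any $s$-dimensional subspace, fails for the very $V$ you propose. With your triangular choice $g_j=x_{r-s+j}-p_j(x_1,\dots,x_{r-s})$ and $\deg p_j\ge 2$, the leading form of $g_j$ is that of $p_j$, which does not involve the last $s$ variables. So $e_r$ is a common zero of all the leading forms, and the criterion fails on every subspace containing $e_r$. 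The power-sum lemma is also false. For odd $q$ and odd degrees (which may be large and prime to $p$, e.g.\ $q=3^a$, $d_j\in\{5,7,11,\dots\}$), the vector $v=(1,-1,0,\dots,0)$ kills every $\sum_i x_i^{d_j}$; in characteristic $2$, $(1,1,0,\dots,0)$ does so for all degrees. If the $p_j$ are these power sums, $V$ contains the whole line $\{(tv,0):t\in\F_q\}$, so it is not even $(1,L)$-evasive for any $L<q$. Newton's identities force $v=0$ only when all power sums up to the support size of $v$ vanish, not $s$ of them with support up to $r>s$. More generally, $s$ forms in $r>s$ variables always share a projective zero over $\overline{\F}_q$, and by Chevalley--Warning a rational one whenever $\sum_j d_j<r$, so a uniform no-zeros-at-infinity condition is not a viable route. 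Appealing to ``Dvir--Lovett's own choice'' does not supply the missing argument.

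The missing idea is a degree adapted to $H$. Take $f_i=\sum_{j=1}^{r}A_{i,j}x_j^{d_j}$ ($1\le i\le s$) with $d_1>\dots>d_r$ and every $s\times s$ minor of $A$ nonzero. Write $H=u+Mz$ with $M$ in reduced column echelon form with pivot rows $j_1<\dots<j_s$, and give $z_l$ weight $D/d_{j_l}$, where $D=\prod_j d_j$. A non-pivot row $j$ involves only $z_l$ with $j_l<j$, hence $d_{j_l}>d_j$. So on $H$ the term $A_{i,j}x_j^{d_j}$ has weighted degree $<D$, while $(u_{j_l}+z_l)^{d_{j_l}}$ has weighted leading term $z_l^{d_{j_l}}$. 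Multiplying by the inverse of $(A_{i,j_l})_{i,l}$ gives $s$ equations whose leading monomials, for a monomial order refining this weighted degree, are the pairwise coprime $z_l^{d_{j_l}}$. Counting standard monomials then gives $|V\cap H|\le\prod_l d_{j_l}$, which is at most $r^s$ if all $d_j\le r$. The same works for $\dim H=s'<s$ using $s'$ suitable combinations.

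Distinct degrees are needed on \emph{all} $r$ variables, and keeping your solved-for variables linear breaks this. For example, take $r=3$, $s=2$, $g_j=x_{1+j}-c_jx_1^{d}$, whose $2\times 2$ minors are all nonzero when $c_1c_2\neq 0$. Then $V$ is a curve contained in the plane $c_2x_2=c_1x_3$, so $|V\cap H|=q$ for that plane. The point count must therefore be obtained differently, e.g.\ by making $x\mapsto x^{d_j}$ permute $\F_q$ for the solved-for variables, at some cost in degree.

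Two further repairs are needed. Across blocks, do not pick $s$ of the $us$ equations; instead project block by block, giving $|S\cap H|\le\prod_b r^{s_b}=r^s$, where $s_b$ is the dimension of the current fibre's image in block $b$ and $\sum_b s_b=s$. Also, padding with free coordinates destroys evasiveness, since a line in a free direction through a point of $S$ lies in $S$; leftover coordinates must be fixed or absorbed into a block.
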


We identify a message polynomial $f(x)=\sum_{i=0}^{k-1} f_i x^i$ with its coefficient vector $(f_0,\ldots,f_{k-1})\in\F_{q^m}^k$. Choose an explicit $(d,(d/\varepsilon)^d)$-subspace-evasive set
\begin{equation}\label{S-choose}
    S\ \subseteq\ \F_{q^m}^k
\end{equation}
Since $S$ is explicit and efficiently constructible, the resulting encoding procedure is explicit as well.

\begin{theorem}[Explicit list-decodable FLRS subcodes]\label{thm:flrs-subcode-explicit}
For any $\eps\in(0,1)$ and integer $1\le s\le \lambda$, there exists an explicit subcode of the $\lambda$-folded linearized Reed--Solomon code $\mathrm{FLRS}[\boldsymbol{a},\gamma;\lambda,n,k]\subseteq\F_{q^m}^{\lambda\times n}$ of block length $n=N/\lambda$ and rate at least $(1-\varepsilon)k/N$
such that:
\begin{enumerate}
  \item \textbf{Decoding radius:} The code is list-decodable in polynomial time from up to
  \[
    e\ \le\ \frac{s}{s+1}\left(\frac{n(\lambda-s+1)-k+1}{\lambda-s+1}\right)
  \]
  of sum-rank errors.
  \item \textbf{List size:} The output list size is at most $(d/\varepsilon)^d$, where $d=\lceil k/m\rceil(s-1)=O(s-1)$.
\end{enumerate}
\end{theorem}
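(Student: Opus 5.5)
The plan mirrors the proof of Theorem~\ref{thm:explicit-lrs-subcode}, with the periodic-subspace/subspace-design pair replaced by the bounded-dimension affine space/subspace-evasive set pair. Let $d=\lceil k/m\rceil(s-1)$ and let $S\subseteq\F_{q^m}^k$ be the explicit $(d,(d/\varepsilon)^d)$-subspace-evasive set from Theorem~\ref{thm:explicit_evasive_set}, instantiated over the field $\F_{q^m}$ with parameters $d$ and $\varepsilon$, as fixed in \eqref{S-choose}. The subcode $\mathcal{C}$ is $\{\boldsymbol{C}(f): (f_0,\ldots,f_{k-1})\in S\}$.

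\textbf{Rate.} Theorem~\ref{thm:explicit_evasive_set} gives $|S|>q^{m(1-\varepsilon)k}$. The FLRS encoding $f\mapsto \boldsymbol{C}(f)$ is injective for $k\le N$, since the unfolded evaluation points $1,\gamma,\ldots,\gamma^{N-1}$ within each block play the role of the LRS evaluation points; I will invoke this as the standard fact for FLRS codes. Hence $|\mathcal{C}|=|S|$, and measuring the rate against the unfolded length $N$ over $\F_{q^m}$ gives rate at least $(1-\varepsilon)k/N$. Explicitness of the encoder follows from the polynomial-time constructibility of $S$ in Theorem~\ref{thm:explicit_evasive_set}, together with the fact that evaluating skew polynomials is polynomial time.

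\textbf{Decoding.} Given $\boldsymbol{y}$, first compute a nonzero $Q\in\mathcal{L}$ via Lemma~\ref{lem:interpolation2}. Then apply Lemma~\ref{lem:solutions2} (equivalently Theorem~\ref{thm:LFRS_code}) to obtain a basis of an affine $\F_{q^m}$-subspace $\mathcal{P}(\boldsymbol{y})$ of dimension at most $d$. By Lemma~\ref{lem:interpolation3}, every message $f$ whose codeword lies within the stated radius satisfies \eqref{eq:f2}, so its coefficient vector lies in $\mathcal{P}(\boldsymbol{y})$. The true list is therefore contained in $\mathcal{P}(\boldsymbol{y})\cap S$. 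Extend $\mathcal{P}(\boldsymbol{y})$ to an affine subspace of dimension exactly $d$ if needed; by the evasive property this intersection has at most $(d/\varepsilon)^d$ points.

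\textbf{Main obstacle.} The hard step is computing $\mathcal{P}(\boldsymbol{y})\cap S$ in polynomial time, not merely bounding its size. Enumerating $\mathcal{P}(\boldsymbol{y})$ is infeasible, since it has $q^{md}$ points. Instead I will use that the Dvir--Lovett set is the common zero set of explicit low-degree polynomials. Restricting these polynomials to the affine parametrization of $\mathcal{P}(\boldsymbol{y})$ yields a zero-dimensional polynomial system in $d$ variables with at most $(d/\varepsilon)^d$ solutions, which can be solved in time polynomial in $(d/\varepsilon)^d$, $k$ and $\log q^m$. With $d=O(s-1)$ constant, since $m=\Theta(n)$, this is polynomial time.

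Finally, I would prune the candidates by re-encoding each one and checking that its sum-rank distance to $\boldsymbol{y}$ is within the radius. This is optional, since the bound already holds before pruning.
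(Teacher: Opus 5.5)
Your construction and decoding steps are exactly those of the paper's proof. Your additions go beyond what the paper writes: the rate count, padding $\mathcal{P}(\boldsymbol{y})$ to dimension exactly $d$, and an actual procedure for computing $\mathcal{P}(\boldsymbol{y})\cap S$ in place of the paper's bare assertion.

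However, there is a genuine gap in the step you import from Lemma~\ref{lem:solutions2} and Theorem~\ref{thm:LFRS_code}. That step says the coefficient vectors $(f_0,\ldots,f_{k-1})$ of all candidates lie in an $\F_{q^m}$-affine subspace of dimension at most $d$. This is false in these coordinates, and the paper's proof has the same gap.

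The reason is that left multiplication by $Q_i$ twists coefficients. The $x^r$-coefficient of $Q_i\cdot f$ is $\sum_{l}q_{i,l}\sigma^{l}(f_{r-l})$; already $Q_1=1+x$ gives $f_1+f_0^{q}$. So \eqref{eq:solution2} is only $\F_q$-linear in $(f_0,\ldots,f_{k-1})$, and the terms $\sigma^j(f_{r-j})$ are visible in \eqref{eq:linear_form} itself. For a concrete case, take $q=m=s=k=2$, $Q_0=\beta x$, $Q_1=\gamma+\gamma x$, $Q_2=1+x$. The solution set is $\{(t,t^2+\beta):t\in\F_4\}$, which is not an $\F_4$-affine line even though $d=1$.

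In general the candidate set is the image of a $d$-dimensional $\F_{q^m}$-affine space under $(g_j)_j\mapsto(\sigma^j(g_j))_j$. Its $\F_{q^m}$-affine hull can be much larger than $d$. Then the $(d,(d/\eps)^d)$-evasiveness of $S$ gives no bound, and your Bezout-type step has no $d$-variable affine parametrization to restrict to.

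The repair is a change of coordinates. Because constants commute, $Q_i\cdot(f\cdot c)\cdot\gamma^{i-1}=(Q_i\cdot f\cdot\gamma^{i-1})\cdot c$ for $c\in\F_{q^m}$. So $f\mapsto\sum_{i\ge1}Q_i\cdot f\cdot\gamma^{i-1}$ is linear for right scalar multiplication. Write $f=\sum_j x^jg_j$, i.e.\ $g_j=\sigma^{-j}(f_j)$; then right multiplication is ordinary scaling of $(g_0,\ldots,g_{k-1})$. Applying $\sigma^{-r}$ to the $x^r$-coefficient of \eqref{eq:solution2} gives
\[
\sigma^{-r}(q_{0,r})+\sum_{j=0}^{r}\sigma^{-r}\bigl(R_{r-j}(\sigma^{r}(\gamma))\bigr)\,g_j=0,\qquad 0\le r<k,
\]
with $R_j$ as in \eqref{eq:R_polynomial2}. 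This is a lower-triangular $\F_{q^m}$-linear system whose $r$-th diagonal entry vanishes iff $R_0(\sigma^r(\gamma))=0$. The paper's root count then shows the candidates form an $\F_{q^m}$-affine subspace of dimension at most $d$ in the $g$-coordinates.

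So define the subcode by $(\sigma^{-j}(f_j))_{j}\in S$. This is a bijection of $\F_{q^m}^k$, so $|\mathcal{C}|=|S|$, and the rest of your argument goes through unchanged.

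A smaller point concerns the injectivity you use for the rate. It needs the points of a single block, $1,\gamma,\ldots,\gamma^{\lambda\eta-1}$ (not up to $\gamma^{N-1}$), to be $\F_q$-linearly independent, i.e.\ $\lambda\eta\le m$. The condition $k\le N$ alone does not suffice.
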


\begin{proof}

Fix $\varepsilon\in(0,1)$ and an integer $s$ with $1\le s\le \lambda$, and let $d:=\lceil k/m\rceil(s-1)$. Let $S\subseteq\mathbb{F}_{q^m}^k$ be an explicit $(d,(d/\varepsilon)^d)$-subspace-evasive set guaranteed by \eqref{S-choose}, and define the subcode $\mathcal{C}$ by restricting the message polynomials $f(x)=\sum_{i=0}^{k-1} f_i x^i\in\F_{q^m}[x;\sigma]_{<k}$ to those whose coefficient vector $(f_0,\ldots,f_{k-1})$ lies in $S$. Fix a received word $\boldsymbol{y}\in\F_{q^m}^{\lambda\times n}$.

\smallskip
\noindent\textbf{Step 1: Structure of the decoder output.}
We run the list-decoding algorithm of Sections~\ref{sec:flrs-interpolation} and~\ref{sec:flrs-finding-solutions}. By Lemma~\ref{lem:interpolation2}, there exists a nonzero interpolation polynomial satisfying the constraints~\eqref{eq:interpolation2}. Moreover, by Lemma~\ref{lem:solutions2}, the set of solutions to the corresponding key algebraic equation forms an affine subspace in the coefficient domain. By Theorem~\ref{thm:LFRS_code}, this affine subspace has dimension at most $d$ and can be computed in polynomial time. We denote this subspace by $\mathcal{P}(\boldsymbol{y})$.

\smallskip
\noindent\textbf{Step 2: Correctness within the decoding radius.}
Suppose the transmitted codeword differs from $\boldsymbol{y}$ in at most
$e\ \le\ \frac{s}{s+1}\left(\frac{n(\lambda-s+1)-k+1}{\lambda-s+1}\right)$ sum-rank errors. 
According to Lemma~\ref{lem:interpolation3}, the affine solution space $\mathcal{P}(\boldsymbol{y})$ contains the coefficient vectors of all message polynomials consistent with $\boldsymbol{y}$ within the decoding radius. 

\smallskip
\noindent\textbf{Step 3: Intersection with the restricted message space.}
As the construction of the subcode $\mathcal{C}$, the set of candidate valid messages consistent with $\boldsymbol{y}$ is exactly $\mathcal{P}(\boldsymbol{y})\cap S$. Since $\dim(\mathcal{P}(\boldsymbol{y}))\leq d$ and $S$ is $(d,(d/\varepsilon)^d)$-subspace-evasive, we have
\[
|\mathcal{P}(\boldsymbol{y})\cap S|\ \le\ (d/\varepsilon)^d.
\]
Moreover, this intersection can be computed efficiently in time polynomial in its size; thus the overall decoding time remains polynomial in the block length (for $m=\Theta(n)$) and in the list bound.
\end{proof}

\begin{remark}
        By choosing $s = O(1/\eps)$ and $\lambda= O(1/\eps^2)$, the resulting code can be list decoded in polynomial time from a fraction $1-R-\eps$ of errors, while producing an output list of size at most $(1/\eps^2)^{O(1/\eps)}$.
    \end{remark}
    
    Moreover, one may instead employ the Monte Carlo construction of subspace-evasive sets due to Guruswami~\cite{guruswami2011linear}, which yields a substantially smaller list size.

    \begin{theorem}[Lemma~9, \cite{guruswami2011linear}]\label{thm:pseudorandom_evasive_set}
    Let $q$ be a prime power and let $k$ be a positive integer. For any $\eps>0$ and any integer $s$ satisfying $1 \leq s \leq \eps k/2$, there exists a Monte Carlo construction of a set $S \subseteq \Fq^k$ with cardinality $|S| > q^{(1-\eps)k}/2$ that is $(s, O(s/\eps))$-subspace evasive with probability at least $1 - q^{-\Omega(k)}$.
    \end{theorem}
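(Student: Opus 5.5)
The plan is to build $S$ as a random subset of $\F_q^k$ in which each point is included with probability $p:=q^{-\eps k}$, with only limited independence between points. A plain union bound over all affine subspaces of dimension $s$ then gives evasiveness, and a concentration argument gives the size bound. Limited independence keeps the construction Monte Carlo: $S$ is described by a short random seed, and membership is testable in time $\poly(k,\log q)$.

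\textbf{Sampling.} Fix $L:=\lceil 4(s+1)/\eps\rceil$. Identify $\F_q^k$ with $\F_{q^k}$ and pick a uniformly random polynomial $g\in\F_{q^k}[z]$ of degree less than $L$. Its values $\{g(x)\}_{x\in\F_{q^k}}$ are then $L$-wise independent and uniform. Fix a subset $T\subseteq\F_{q^k}$ of size $|T|=\lfloor p\,q^k\rfloor=\lfloor q^{(1-\eps)k}\rfloor$, for example the elements whose first $(1-\eps)k$ coordinates are arbitrary and whose remaining coordinates are zero (rounding as needed). Put
\[
S:=\{x: g(x)\in T\}.
\]
Each $x$ then lies in $S$ with probability $p':=|T|/q^k$, which is $p$ up to rounding, and these events are $L$-wise independent.

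\textbf{Size.} We have $\mathbb{E}|S|=|T|$. Pairwise independence already gives $\mathrm{Var}|S|\le |T|$. Chebyshev's inequality then yields
\[
|S|>|T|/2\gtrsim q^{(1-\eps)k}/2
\]
with probability $1-O(1/|T|)=1-q^{-\Omega(k)}$.

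\textbf{Evasiveness.} Fix an affine subspace $H$ of dimension $s$, so $|H|=q^s$. If $|H\cap S|\ge L$, some $L$-subset of $H$ lies entirely in $S$. By $L$-wise independence,
\[
\Pr[|H\cap S|\ge L]\le \binom{q^s}{L}p'^L\le q^{sL-\eps kL}.
\]
Since $s\le \eps k/2$, the exponent is at most $-\eps kL/2\le -2(s+1)k$. The number of affine $s$-dimensional subspaces of $\F_q^k$ is at most $q^{(s+1)k}$: choose a translate and $s$ spanning vectors. A union bound therefore bounds the failure probability by $q^{-(s+1)k}$. Every subspace of dimension at most $s$ lies inside one of dimension exactly $s$, so the same bound covers them. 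Hence $S$ is $(s,L)$-evasive with $L=O(s/\eps)$, with probability $1-q^{-\Omega(k)}$.

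\textbf{Main obstacle.} The delicate step is choosing the independence level and $p$ so that two things hold together. First, the $L$-wise bound must beat the roughly $q^{(s+1)k}$ subspaces, which forces $L\gtrsim (s+1)/\eps$ and uses the hypothesis $s\le\eps k/2$ to absorb the $\binom{q^s}{L}$ factor. Second, $p$ must be large enough to keep $|S|>q^{(1-\eps)k}/2$. The rounding of $|T|$ also has to be handled so that $p'\le q^{-\eps k}$ up to a constant factor. A constant-factor loss only shifts $L$ by a constant and does not affect the $O(s/\eps)$ bound.
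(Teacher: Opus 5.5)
Your argument is correct and is essentially the standard proof from Guruswami's paper, which this paper cites without reproducing. That proof takes an $O(s/\eps)$-wise independent random subset of density $q^{-\eps k}$. It uses Chebyshev with pairwise independence for the size bound; the floor in $|T|$ is harmless, since $|S|$ is an integer and so $|S|>\lfloor q^{(1-\eps)k}\rfloor/2$ already forces $|S|>q^{(1-\eps)k}/2$. It then takes a union bound over the at most $q^{(s+1)k}$ affine $s$-dimensional subspaces, using $\binom{q^s}{L}q^{-\eps kL}\le q^{-\eps kL/2}$. As in the source, the $q^{-\Omega(k)}$ failure bound from Chebyshev implicitly treats $\eps$ as bounded away from $1$.
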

    \begin{theorem}
        For any $\eps > 0$, there exists a Monte Carlo construction of a subcode
        of the $\lambda$-folded linearized Reed--Solomon code
        $\mathcal{C} = FLRS[\balpha,\gamma;\lambda,n,k]$ of block length $n = N/\lambda$ such
        that, with high probability, $\mathcal{C}$ has rate at least
        $(1-\eps)k/2N$ and is list-decodable in polynomial time from a fraction
        \[
        \frac{s}{s+1}\left(\frac{n(\lambda-s+1)-k+1}{\lambda-s+1}\right)
        \]
        of sum-rank errors, for all integers $1 \le s \le \lambda$, with output list size at most
        $O(s/\eps)$.
    \end{theorem}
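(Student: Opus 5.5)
The proof repeats the argument of Theorem~\ref{thm:flrs-subcode-explicit}, with the Dvir--Lovett explicit set replaced by the Monte Carlo subspace-evasive set of Theorem~\ref{thm:pseudorandom_evasive_set}.

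\textbf{Construction.} Set $d:=\lceil k/m\rceil(s-1)$. Apply Theorem~\ref{thm:pseudorandom_evasive_set} over the field $\F_{q^m}$ in place of $\F_q$, with ambient space $\F_{q^m}^k$ and evasiveness parameter $d$. This needs $d\le \eps k/2$, which holds under the regime $m=\Theta(n)$ provided $k$ is large relative to $s/\eps$; I would state this as a mild assumption on the parameters. With probability at least $1-q^{-\Omega(mk)}$, the resulting set $S\subseteq\F_{q^m}^k$ has $|S|>q^{m(1-\eps)k}/2$ and is $(d,O(d/\eps))$-subspace evasive. The subcode $\mathcal{C}$ consists of the codewords $\boldsymbol{C}(f)$ whose coefficient vector lies in $S$. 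The encoding $f\mapsto \boldsymbol{C}(f)$ is injective, so $\log_{q^m}|\mathcal{C}|\ge (1-\eps)k-\log_{q^m}2$. This gives the stated rate bound: it is at least $(1-\eps)k/2N$, and in fact essentially $(1-\eps)k/N$.

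\textbf{Decoding.} Given $\boldsymbol{y}$, compute the interpolation polynomial $Q$ by Lemma~\ref{lem:interpolation2}. Then compute the affine solution space $\mathcal{P}(\boldsymbol{y})$ of~\eqref{eq:f2} via Lemma~\ref{lem:solutions2} and Theorem~\ref{thm:LFRS_code}; it has $\F_{q^m}$-dimension at most $d$. By Lemma~\ref{lem:interpolation3}, every message whose codeword lies within the stated radius belongs to $\mathcal{P}(\boldsymbol{y})$, so the output list is $\mathcal{P}(\boldsymbol{y})\cap S$.

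\textbf{List size.} The evasiveness property is stated for affine subspaces of dimension exactly $d$. If $\dim\mathcal{P}(\boldsymbol{y})<d$, embed $\mathcal{P}(\boldsymbol{y})$ in a $d$-dimensional affine subspace, which only enlarges the intersection. Hence $|\mathcal{P}(\boldsymbol{y})\cap S|\le O(d/\eps)$. Since $d=O(s)$ in the regime $m=\Theta(n)$, this is $O(s/\eps)$.

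\textbf{Uniformity in $s$ and efficiency.} The theorem quantifies over all $s$, so the construction must not depend on $s$. I would fix a single $S$ that is evasive at the largest relevant dimension $d_{\max}=\lceil k/m\rceil(\lambda-1)$; any smaller $\mathcal{P}(\boldsymbol{y})$ embeds in a subspace of that dimension. This makes the list bound $O(\lambda/\eps)$ rather than $O(s/\eps)$. To obtain $O(s/\eps)$ for each $s$, I would instead take a union bound over the at most $\lambda$ values of $d$: the failure probabilities $q^{-\Omega(k)}$ are summable, and Guruswami's construction is evasive simultaneously for all dimensions up to its parameter with list bound $O(d/\eps)$.

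\textbf{Main obstacle.} The hardest step is computing $\mathcal{P}(\boldsymbol{y})\cap S$ efficiently. Guruswami's pseudorandom set is defined by low-degree polynomial constraints together with a random linear structure, which allows the intersection with a given affine subspace to be found in $\mathrm{poly}$ time. I would invoke that algorithmic part of~\cite{guruswami2011linear} directly, working over $\F_{q^m}$.
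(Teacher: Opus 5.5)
The paper gives no separate proof of this statement. The intended argument is exactly yours: rerun the proof of Theorem~\ref{thm:flrs-subcode-explicit} with Theorem~\ref{thm:pseudorandom_evasive_set} in place of Theorem~\ref{thm:explicit_evasive_set}. Your rate computation, your embedding of lower-dimensional candidate spaces, and your treatment of uniformity in $s$ (which the paper ignores) are fine.

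\textbf{The efficiency gap.} The genuine gap is the step you yourself single out, the efficient computation of $\mathcal{P}(\boldsymbol{y})\cap S$, and your resolution of it does not hold. Theorem~\ref{thm:pseudorandom_evasive_set} is purely combinatorial: it bounds $|S\cap H|$ but supplies no algorithm for computing $S\cap H$. In Guruswami's folded Reed--Solomon setting, pruning is done by enumerating the $(s-1)$-dimensional candidate space over $\F_q$ and testing membership. That is affordable only because there $q=\mathrm{poly}(N)$, and it is why the decoding time of his Monte Carlo subcodes remains $N^{\Omega(1/\eps)}$. Here $\mathcal{P}(\boldsymbol{y})$ is a $d$-dimensional affine space over $\F_{q^m}$ with $m=\Theta(n)$, so the same enumeration costs $q^{md}=2^{\Omega(n)}$ as soon as $d\geq 1$. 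Invoking \cite{guruswami2011linear} over $\F_{q^m}$ therefore gives exponential, not polynomial, decoding time. In Theorem~\ref{thm:flrs-subcode-explicit} this step is covered by the intersection algorithm that comes with the Dvir--Lovett sets \cite{dvir2012subspace}; the random sets come with nothing comparable. To close the gap you need a new ingredient. For instance, you could prove that for a suitable low-degree pseudorandom construction, with high probability and for all $H$ simultaneously, $S\cap H$ is the rational solution set of a zero-dimensional polynomial system in $d$ variables, solvable by elimination and root finding over $\F_{q^m}$. Otherwise the polynomial-time claim must be dropped.

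\textbf{The coordinate problem.} A second problem, which you inherit from Lemma~\ref{lem:solutions2} rather than introduce, also needs repair. The solution set of \eqref{eq:f2} is in general not an $\F_{q^m}$-affine subspace in the coordinates $(f_0,\ldots,f_{k-1})$, because the coefficient of $x^r$ involves $\sigma^j(f_{r-j})$; it is only $\F_q$-affine. What is true is that $f\mapsto\sum_j Q_j\cdot f\cdot\gamma^{j-1}$ commutes with right multiplication $f\mapsto f\cdot c$, which acts by $f_i\mapsto f_i\sigma^i(c)$. Hence the solution set is an $\F_{q^m}$-affine subspace of dimension at most $d$ in the coordinates $g_i=\sigma^{-i}(f_i)$.

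For example, take $m\geq 2$, $k=s=2$, $Q_0=0$, $Q_1=-\gamma-b\gamma^{q^2}x$, $Q_2=1+bx$ with $b\neq 0$. The solutions are $\{(f_0,\kappa f_0^q): f_0\in\F_{q^m}\}$ for a fixed $\kappa\neq 0$. This set has size $q^m$ and lies in no $\F_{q^m}$-line of $\F_{q^m}^2$, although $d=1$, so the evasiveness of $S$ gives no bound on its intersection with $S$. The fix is simple: require $(g_0,\ldots,g_{k-1})\in S$ instead of $(f_0,\ldots,f_{k-1})\in S$. This changes neither the rate nor the rest of your argument.
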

    In particular, by choosing $s = O(1/\eps)$ and $\lambda = O(1/\eps^2)$, the output list size can be bounded by $O(1/\eps^2)$.

\bibliographystyle{plain}
\bibliography{main}
\appendix

\end{document}